\numberwithin{equation}{section}
\newtheorem{theorem}[equation]{Theorem}
\newtheorem{lemma}[equation]{Lemma}
\theoremstyle{definition}
\newtheorem{defin}{Definition}
\newtheorem{question}{Question}
\newtheorem{proposition}[equation]{Proposition}
\begin{document}

\title[Approximate Counting and Quantum Computation]
{Approximate Counting and Quantum Computation}

\author{M. Bordewich}
\email{magnusb@comp.leeds.ac.uk}
\address{Mathematical Institute\\University of Oxford\\ 24-29 St. Giles'\\Oxford, OX1}

\author{Michael Freedman}
\email{michaelf@microsoft.com}
\address{Microsoft Research\\One Microsoft Way\\Redmond, WA 98052}

\author{L. Lov\'asz}
\email{lovasz@microsoft.com}
\address{Microsoft Research\\One Microsoft Way\\Redmond, WA 98052}

\author{D. Welsh}
\email{dwelsh@maths.ox.ac.uk}
\address{Mathematical Institute\\University of Oxford\\ 24-29 St. Giles'\\Oxford, OX1}

\begin{abstract}

Motivated by the result that an `approximate' evaluation
of the Jones polynomial of a braid at a $5^{th}$ root of unity can
be used to simulate the quantum part of any algorithm in the
quantum complexity class BQP, and results relating BQP to the
counting class GapP, we introduce a form of additive approximation
which can be used to simulate a function in BQP. We show that all
functions in the classes \#P and GapP have such an
approximation scheme under certain natural normalisations. However
we are unable to determine whether the particular functions we are
motivated by, such as the above evaluation of the Jones
polynomial, can be approximated in this way. We close with some
open problems motivated by this work.\end{abstract}

\maketitle

\section{Introduction}

The quantum complexity class BQP consists of those decision
problems that can be computed with bounded error, using quantum
resources, in polynomial time. Relative to the polynomial
hierarchy of classical computation, it is known that
$$\textrm{BPP}\subseteq \textrm{BQP}\subseteq \textrm{PP}\subseteq \textrm{PSPACE},$$ and at the
moment none of these inclusions is known to be
proper~\cite{adl97}. Recent work by Freedman, Kitaev, Larson and
Wang~\cite{fre02a} has shown that the `quantum part' of any
quantum computation can be replaced by an approximate evaluation
of the Jones polynomial of a related braid.   A classical
polynomial time algorithm can convert a quantum circuit for an
instance of such a problem, into a braid, such that the
probability that the output of the quantum computation is zero is
a simple (polynomial time) function of the Jones polynomial of the
braid at a 5$^{th}$ root of unity. For an exact statement of this
see Freedman, Kitaev, Larsen and Wang~\cite{fre02a}, or the more
detailed papers by Freedman, Kitaev and Wang~\cite{fre02b}, and
Freedman, Larsen and Wang~\cite{fre02c, fre02d}.

It therefore follows that if we take $A(L,x)$ to be an oracle that
returns the evaluation of the Jones polynomial of a braid $L$ at a
point $x$, any BQP computation can be replicated by a classical
polynomial time algorithm with one call to $A$, i.e.
$\mathrm{BQP}\subseteq \mathrm{P}^{A}$. Since computing
the Jones polynomial is in general a \#P-hard problem, this does
not help. However, it is not an exact evaluation of the Jones
polynomial that is required, but an approximate evaluation at a
specific point for braids of a specific class. Hence we may look
for a weaker oracle $A'$ such that  $\mathrm{BQP}\subseteq
\mathrm{P}^{A'}$.

In a different approach Fortnow and Rogers~\cite{for99} link
quantum complexity to the classical complexity class GapP. In
particular they show that for any quantum Turing machine $M$
running in time $t(n)$ there is a GapP function $f$ such that for
all inputs $x$
$$\textbf{Pr}(M(x)\textrm{ accepts})=\frac{f(x)}{5^{2t(|x|)}}.$$
Again evaluating a general GapP function exactly is \#P-hard,
however one can simulate $M$ using a polynomial algorithm with
access to an oracle $A''$,  where $A''$ is an oracle giving an
approximation to the GapP function $f$.

With this motivation we examine the type of approximation needed
in order to simulate a quantum computation, and then consider the
complexity of such approximations. It turns out that an
\emph{additive approximation} is sufficiently powerful. We should
emphasize that a polynomial time additive approximation scheme is
weaker than the familiar and much studied \emph{fully polynomial
randomized approximation scheme} (FPRAS). However it is well known
that any function which counts objects for which the corresponding
decision problem is NP-complete cannot have an FPRAS (unless
NP=RP). We show below that \emph{all} \#P functions do have
polynomial time additive approximation schemes under natural
normalisations. We also show that in two senses this is the best
sort of approximation we can hope to achieve in polynomial time
(see Theorems~\ref{sample},~\ref{log} and \ref{bestnormcol}).


\section{Quantum computing}

A \emph{link}\index{link} $L$ is a smooth submanifold of $S^3$, consisting of $c(L)$ disjoint simple closed curves. A \emph{braid}\index{braid} on $m$ strings is constructed as follows. Take $m$ distinct points in a horizontal line ($p_1, p_2,\ldots,p_m$) and link them to $m$ distinct points ($q_1, q_2,\ldots,q_m$)  lying on a parallel line, by $m$ disjoint simple arcs $f_i$ in $\mathbb{R}^3$, so that $f_i$ starts at $p_i$ and ends at $q_{\pi(i)}$ where $\pi$ is a permutation. A braid can be closed in numerous ways, by identifying the points $p_i$ and $q_j$ in some way, creating a link. Similarly any link can be represented as a braid. In particular, the \emph{plat closure}\index{plat closure} of a braid on $2m$ strings is obtained by identifying the points $p_{2i-1}$ and $p_{2i}$, and $q_{2i-1}$ and $q_{2i}$ for $1\leqslant i\leqslant m$.

\subsection{Topological computing and the Jones polynomial}\index{topological quantum computing}\label{jones}

One of the major difficulties in building a quantum computer has been the sensitivity of the system to outside interference. Freedman, Kitaev, Larson and Wang~\cite{fre02a} introduced the notion of \textsl{topological quantum computing}, in an attempt to make the computations less sensitive to small disturbances. The basic idea is as follows. One can create pairs of special quasi-particles, called anyons, in a 2-dimensional plane sandwiched between two blocks of a superconductor. The anyons have a certain probability of annihilating each other (leaving a vacuum) when brought together. However this probability changes, according to the laws of quantum mechanics when one anyon is moved around the other before they are brought together. Even if it is moved in a complete circle around the other, on reaching its original position the probability of annihilation is changed. Thus a system of a large number of these particles can be used as a quantum computer for decision problems; pairs of anyons are created, moved around relative to each other, and then a predefined pair of the anyons is brought together. If this pair annihilate each other leaving a vacuum, this is taken to be an output of 0 (or rejection), if they do not it is taken to be an output of 1 (or acceptance). The paths in the 2-dimensional surface, combined with a time dimension, give rise to a 3-dimensional representation of the `computation' as a braid. There remain major difficulties in constructing such a quantum computer, and controlling the movement of anyons. However one of the important results of Freedman \textsl{et al.} is that small changes in the paths of the anyons do not affect the outcome of the computation, indeed it is determined by the isotopy class of the braid, and therefore stable under perturbations of the paths that do not change the braiding itself.

The way in which the probability changes is sufficiently subtle that such a quantum computer is universal in the following sense. The Kitaev-Solovay theorem~\cite{kit97,sol} together with the density theorem of Freedman, Larsen and Wang~\cite{fre02c, fre02d} yields an algorithm which given any quantum circuit on $m/2$ qubits and error parameter $\epsilon$, outputs a braid on $m$ strings using a polynomial number of crossings (polynomial in $m$ and $\log \epsilon^{-1}$). The topological quantum computation using this braid efficiently simulates the quantum circuit, (the probability of acceptance is within $\epsilon$ of the correct value). Since an algorithm for a BQP problem can be used to generate a quantum circuit for a given instance, the above result gives an explicit method for finding an equivalent topological quantum computation, and so the class BQP is the same under either model.

Hence a quantum computation on $m/2$ qubits is approximately represented by a braid $b$. In showing that the topological quantum computation depends on the isotopy class of $b$ alone~\cite{fre02a}, the following link $L$ is considered. $L$ is the plat closure of the composition of $b^{-1}, b$ and a small loop $\gamma$ inserted (between $b^{-1}$ and $b$) around the leftmost two strings (see Fig.~\ref{braidfig} and Fig.~\ref{linkfig}). Both $b$ and $b^{-1}$ are needed as any quantum computation must be reversible; the loop $\gamma$ effects a measurement of the qubit represented by the leftmost pair of strings. The conclusions of~\cite{fre02a} may then be summarized as the following theorem: refer to~\cite{fre02a} for full details.  %
\begin{figure}
\begin{center}
\input{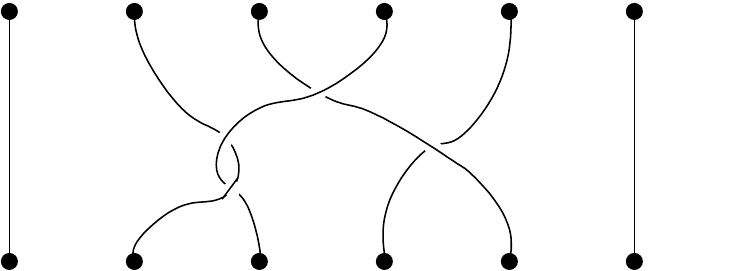tex_t}%
 \caption{The braid $b$.} \label{braidfig}%
\end{center}%
 \end{figure}%
 \begin{figure}
\input{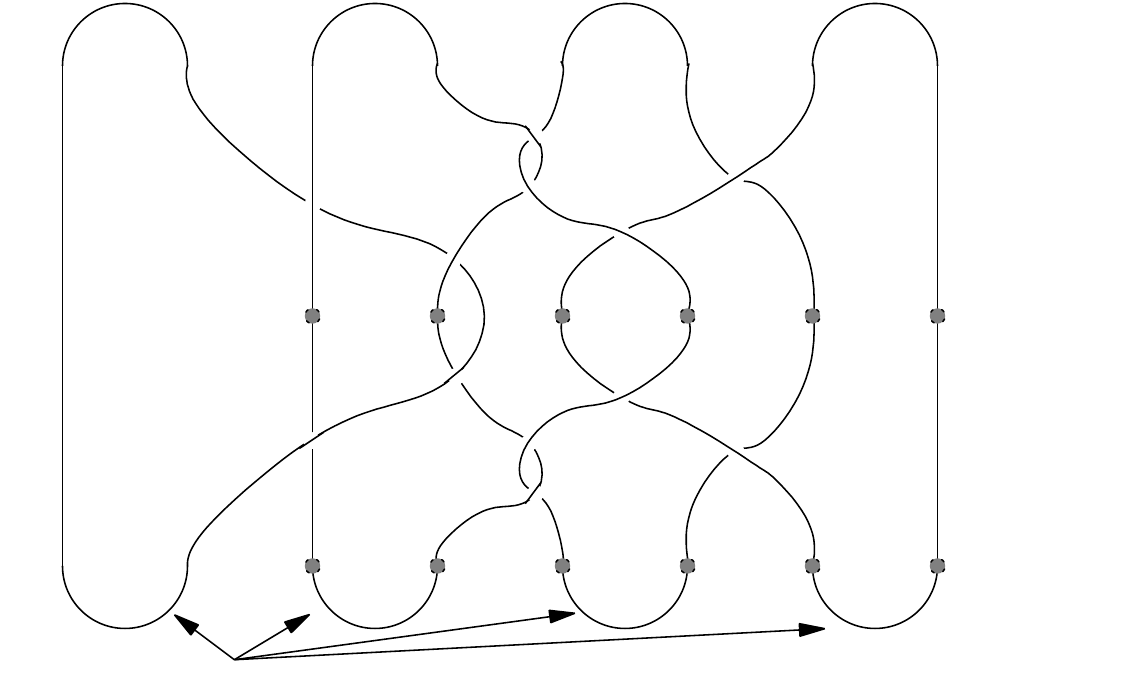tex_t}%
 \caption{The link $L$.} \label{linkfig} %
\end{figure}%
\begin{theorem}\label{FKLWtheorem}
Let $\pi$ be a problem in BQP, with a polynomial time quantum algorithm $\mathcal{A}$, and let $\mathcal{I}$ be an instance of $\pi$. For any $\epsilon>0$, a link $L$ may be determined in time polynomial in $|\mathcal{I}|$ and $\log \epsilon$ such that
\begin{equation} \big|\emph{\textbf{Pr}}(\mathcal{A}(\mathcal{I})=0) - \frac{1}{1+[2]_5^2} \bigg( 1+
\frac{(-1)^{c(L)+w(L)}(-a)^{3w(L)} V_L(e^{2\pi i/5})}{
[2]_5^{m(L)-2}}\bigg)\big|< \epsilon,\label{FKLWeqn1}\end{equation}
 where
$a=e^{i\pi/10}$ and $[2]_5 =2 \cos \pi/5$ and $c(L), w(L)$ and $m(L)$
are the number of components, writhe and number of minima of
the link $L$ respectively. \end{theorem}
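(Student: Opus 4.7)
The plan is to derive (\ref{FKLWeqn1}) by combining three ingredients established in \cite{fre02a, fre02c, fre02d}: a universality/density statement for the braid group action on the Jones--Wenzl representation at $q=e^{2\pi i/5}$, a direct computation of the acceptance probability in the anyonic model, and the standard translation between the Markov trace of a plat closure and the Jones polynomial.

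First, I would fix the Hilbert space $\mH_m$ on which the $m$-strand braid group acts unitarily via the Jones--Wenzl representation at a $5^{th}$ root of unity, identifying a $2^{m/2}$-dimensional subspace with $(\C^2)^{\ot m/2}$ via the pairing of adjacent strands. By the density theorem of Freedman--Larsen--Wang together with the classical Solovay--Kitaev algorithm, the polynomial-size quantum circuit underlying $\A$ on $m/2$ qubits can be replaced, to within operator-norm error $O(\ep)$, by a braid word $b$ of length polynomial in $|\I|$ and $\log\ep^{-1}$. This is the braid $b$ of Fig.~\ref{braidfig}, and produces the link $L$ of Fig.~\ref{linkfig} when composed with its inverse and $\ga$ and closed as a plat.

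Second, I would rewrite $\textbf{Pr}(\A(\I)=0)$ in the topological model as $\langle v_0,\rho(b)^{*}\Pi\rho(b)v_0\rangle$, where $v_0$ is the vacuum vector (nested caps pairing adjacent strands) and $\Pi$ is the projector onto the vacuum state of the leftmost strand pair. The key step is to realize $\Pi$ diagrammatically: the leftmost pair of strands carries the two-dimensional two-strand Temperley--Lieb representation, and insertion of the small loop $\ga$ acts as multiplication by $[2]_5^2$ on the trivial summand and by $-1$ on the non-trivial summand. Consequently $\Pi=(\ga+1)/(1+[2]_5^2)$, so that upon taking the plat closure
\[
\textbf{Pr}(\A(\I)=0)=\frac{1}{1+[2]_5^2}\bigl(1+T(L)\bigr),
\]
where $T(L)$ is the renormalised plat closure evaluation of $b^{-1}\ga b$ and the ``$1$'' is the trace of the identity summand (i.e.\ the closure without $\ga$, which gives the trivial word $b^{-1}b$).

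Third, I would convert $T(L)$ into the Jones polynomial $V_L(e^{2\pi i/5})$ via the Kauffman bracket dictionary: the writhe factor $(-a)^{3w(L)}$ rescales the bracket to the framing-independent Jones polynomial, the sign $(-1)^{c(L)+w(L)}$ accounts for the loop-value convention $-[2]_5$ versus $+[2]_5$ and the number $c(L)$ of components, and the denominator $[2]_5^{m(L)-2}$ comes from the $m(L)-2$ cap/cup pairings in the plat closure that are additional to the two strands joined by $b$ and $b^{-1}$. Assembling these factors and substituting into the previous display gives exactly (\ref{FKLWeqn1}).

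The main obstacle will be step two: correctly identifying the computational-basis projector as the diagrammatic operator $(\ga+1)/(1+[2]_5^2)$, which in turn requires an explicit match between the qubit Hilbert space $(\C^2)^{\ot m/2}$ and the subspace of $\mH_m$ singled out by pairing adjacent strands. All subsequent identities are exact topological invariance statements, so the error $\ep$ enters the final inequality only through the Solovay--Kitaev compilation of $\A$ into $b$.
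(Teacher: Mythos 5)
The paper does not prove Theorem~\ref{FKLWtheorem}: it is quoted verbatim as a summary of the conclusions of \cite{fre02a} (the text explicitly says ``refer to~\cite{fre02a} for full details,'' and further points to~\cite{wan03} for the derivation of the formula). So there is no internal proof against which to compare your argument line by line.

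That said, your reconstruction of the external proof is structurally sound and follows the same route as \cite{fre02a}/\cite{wan03}: (i) Solovay--Kitaev plus the Freedman--Larsen--Wang density theorem to compile the BQP circuit into a braid word $b$ of length polynomial in $|\I|$ and $\log\ep^{-1}$, which is where the additive error $\ep$ enters and nowhere else; (ii) expressing the acceptance probability as a matrix element $\langle v_0,\rho(b)^*\Pi\rho(b)v_0\rangle$ and realising the measurement projector $\Pi$ onto the vacuum channel of the leftmost pair as the diagrammatic operator built from the inserted loop $\ga$; and (iii) translating the resulting plat-closure evaluation of $b^{-1}\ga b$ into the Jones polynomial via the Kauffman-bracket dictionary, which produces the writhe correction $(-a)^{3w(L)}$, the sign $(-1)^{c(L)+w(L)}$, and the cup/cap normalisation $[2]_5^{m(L)-2}$. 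Two caveats worth flagging. First, the exact coefficients in $\Pi=(\ga+\mathbf{1})/(1+[2]_5^2)$ depend on the normalisation convention for the loop value of $\ga$ around a pair of strands (whether it evaluates to $[2]_5^2$ on the trivial channel and to $-1$ or $+1$ on the $\tau$ channel depends on the sign convention for $d$ in the Kauffman bracket); your statement is the one that matches the $1/(1+[2]_5^2)$ prefactor, but the derivation you sketch does not fix this convention independently, so it is being read off from the target formula rather than derived. Second, your explanation of the exponent $m(L)-2$ (``additional to the two strands joined by $b$ and $b^{-1}$'') is imprecise: $b$ and $b^{-1}$ are stacked, not joined by a pairing, and the $-2$ offset comes from the normalisation of the inner product $\langle c|p(b)|c\rangle = V_L(\alpha)/d^{m/2}$ described in Section~\ref{jones} together with the extra pair of strands introduced by closing off $\ga$, so this bookkeeping would need to be redone carefully against \cite{wan03}. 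Neither caveat is a gap in the logical structure; both are places where the actual computation in the cited references supplies the constants you are reverse-engineering.
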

The minima of~\cite{fre02a} are the individual joins in the plat closure at the bottom of the braid, hence $m(L)$ is half the number of strings in $b$ plus one. By construction, the number of strings in $b$ is twice the number of (qu)bits in the input, hence $m(L)=|I|+1$. The writhe is defined for an oriented link, and is the number of `positively oriented' crossings minus the number of `negatively oriented' crossings, with respect to the given orientation of $L$. It is easily computable. The Jones polynomial is also defined for oriented links, however the formula above is independent of the orientation chosen for $L$. Since every crossing in $b$ appears reversed in $b^{-1}$, these do not contribute to the writhe of $L$, hence the writhe is determined by the four crossings involving $\gamma$, and can only be $-4, 0$,
or $4$ (depending on the orientation of the two strands passing
through $\gamma$). If $L^*$ denotes $L$ with the orientation of  one component reversed, then $V_{L^*}(t)=t^{-3\lambda/2}V_L(t)$, where $\lambda$ is the contribution to the writhe of $L$ from crossings of the reversed component over (or under) the rest of $L$. Hence only reversing $\gamma$ or one of the leftmost two strings can affect the Jones evaluation, and it is easily checked that this is compensated by the change in the terms involving the writhe. To be consistent, we will retain the notation $[2]_5$ for $2 \cos \pi/5$ from~\cite{fre02a} throughout.

It is Theorem~\ref{FKLWtheorem} that gives rise to our interest in approximating
$V_L(t)$. Further explanation of the derivation of this equation is given
in~\cite{wan03} where the special but sufficient case $w (L) =0$
is considered (we could restrict attention to braids with writhe zero without affecting the main results). Although this formula involves an evaluation at
$e^{2\pi i/5}$, similar results can be obtained for the $n^{th}$
root of unity for any $n \geq 5, n \neq 6$ but these involve
multiple $L$'s.

The preceding paragraphs explain how an evaluation of a Jones
polynomial can yield the answer to a (general) quantum
computation.  There is a weak converse to this.  Suppose we have a
quantum computer at our disposal with which to learn something
about a Jones evaluation of a link $L$.  We may assume without loss of generality
(Freedman, Larsen and Wang~\cite{fre02c}) that our quantum computer is of the
topological kind and thus nicely adapted to braids.  We can
(easily) write $L$ as  the plat closure of a braid $b$ by
starting with the link diagram and pulling the overcrossings up
and the undercrossings down.  Let $m$ be the number of strands of this braid $b$. If we wish to
evaluate $V_L (\alpha)$, $\alpha = e^{2 \pi i/r}$, we encounter an
important constant $d=2\cos \pi/r$.  The norm $|V_L (\alpha)|$ is
bounded from above by $d^{m/2}$ with $|V_L (\alpha)| = d^{m/2}$
 achieved only when $L$ is the unlink on $m/2$ components, a case which
occurs when $b$ is the identity braid.  Our quantum computer will be able to
provide an additive approximation (see below) of $|V_L (\alpha)|$ as a
variable with range $[0,d^{m/2}]$.

Given $m$ marked points in the horizontal plane and the number
$\alpha$, there is a finite dimensional Hilbert space $H$ on which
$m$-strand braids act through a Jones representation $p$. The
$m/2$ maxima (in the plat closure) determine a vector $c$ in this space and the $m/2$ minima (in the plat closure)
determine a vector in the dual $H^\ast$, which when identified
with $H$ by the hermitian inner product, is the same $c$.

We have
\begin{eqnarray*}
\frac{V_L(\alpha)}{d^{m/2}} = \langle c |p (b) |c\rangle.
\end{eqnarray*}
Furthermore Prob$(|0 \rangle) = |\langle c | p (b) | c
\rangle |^2$, where Prob$(|0 \rangle)$ refers to the physical
probability that below the cups, after all the `particles' have
been fused in pairs, the vacuum $|0\rangle$ is observed, that is no
nontrivial particles result from these fusions.  The last formula
reflects the quantum mechanical rule that the probability of
observing  an outcome, in this case $|0\rangle$, is proportional
to the square of the component of the state vector in the
$|0\rangle$-direction.

Because the range of $|V_L (\alpha ) |$ depends exponentially on the
number of strings in the braid, $m(b)$, our quantum computer will give much better information (sooner)
if we succeed in displaying $L$ with, or nearly with, the minimal
$m(b)$, called the \emph{braid index} of $L$.

Turning to the computational question, it is a theorem of Thistlethwaite~\cite{thi87} that when $L$ is an
alternating link, with associated plane graph $G$, then \begin{displaymath}
V_L(t)=\alpha T(G;-t,-t^{-1})\end{displaymath}
where $\alpha$ is an easily computable function, and $T$ is the Tutte
polynomial of the planar graph.
It is known \cite{ver92} that even for planar graphs, computing
$T(G;x,y)$ is \#P-hard, except when $(x,y)$ is one of a few
special points, or lies on a hyperbola satisfying
$(x-1)(y-1)=q\in\{1,2\}$.

Since $(-e^{2\pi i /n},-e^{-2\pi i /n}), n \geqslant 5,$ is not one of these
`easy' points, exact computation in polynomial time is not
feasible (unless \#P=P). It also seems unlikely that an FPRAS
exists for these points. However the notion of an FPRAS seems to
be much stronger than the kind of approximation that is needed in
the current context. For any BQP language $L$ there is a quantum
Turing machine $M$ such that for all $x\in L$, $M$ accepts with
probability at least $3/4$, and for all $x\not\in L$, $M$ accepts
with probability at most $1/4$. Therefore all that we require is
to determine which quartile of its range $V_L(e^{2\pi i/5})$ lies
in. We return to this topic in Section~\ref{TP}.

When considering algorithms on braids or links, the size of the input is taken to be the number of crossings. A quantum gate on $m/2$ qubits is converted into a braid on $m$ strings of length polylog $(1/\epsilon)$, therefore the number of crossings in the braid associated with a BQP circuit is polynomially related to the
number of gates in the BQP circuit. This in turn is bounded by a
polynomial in the input size, hence an algorithm will either be
polynomial with respect to both the number of crossings in the braid and the number of input qubits to the circuit, or neither.

\subsection{GapP functions}
The class of counting functions which constitute \#P is the set of
functions that count certificates of membership of a language
belonging to NP, hence \#P functions are constrained to evaluate
to non-negative integers. The class of functions GapP can be
regarded as the closure of \#P under subtraction, that is to say
a function $f:\mathcal{I}\mapsto \mathbb{Z}$ is in GapP if and
only if there exist functions $g,h \in \#P$ such that
$f(I)=g(I)-h(I)$ for all $I\in \mathcal{I}$. The class AWPP can be
defined as follows~\cite{fen03}. A language $L$ is in AWPP if and
only if there exist a polynomial $p$ and a GapP function $g$ such
that for all $I\in \mathcal{I}$,
$$I\in L \quad \Rightarrow \quad \frac{3}4 \leqslant \frac{g(I)}{2^{p(|I|)}} \leqslant 1\phantom{n}$$
\begin{equation}I\not\in L \quad \Rightarrow \quad 0 \leqslant \frac{g(I)}{2^{p(|I|)}} \leqslant \frac{1}4.\label{AWPP}\end{equation}

The increase in power of quantum computation over classical
computation is that in a quantum computer there is an ability to cancel out computations paths.
Fortnow and Rogers~\cite{for99} show that this power is captured
by the class GapP, in which a similar effect is seen. In
particular they show that BQP$\subseteq$AWPP. It therefore follows
that for a BQP language $L$, polynomial $p$ and  GapP function $g$
satisfying (\ref{AWPP}), determining which quartile of the range
$[0,2^{p(|I|)}]$ contains $g(I)$ would be enough to determine membership of
$L$.

To summarize, our foremost problems can be interpreted as finding
a suitable approximation for the Jones polynomial of a link,
$V_L(t)$, the Tutte polynomial of an associated planar graph,
$T(G;x,y)$, at a particular point, or for the GapP functions
arising from BQP languages.


\section{Approximation}
Given a function $\psi:\mathcal{I}\mapsto \mathbb{R}$ for which no efficient exact evaluation algorithm is known, one may be interested in an `approximate' answer instead. A standard approach is to look for a fully polynomial randomized approximation scheme (FPRAS) for the problem. If $\psi$ is such a function and $I\in \mathcal{I}$ is an input, then an FPRAS for $\psi$ is a randomized algorithm that given any $I\in\mathcal{I}, \epsilon>0$ will output  $\hat{\psi}(I,\epsilon)$, such that
\begin{displaymath}
\textbf{Pr}[|\hat{\psi}(I,\epsilon)-\psi(I)|>\epsilon\psi(I)]< 1/4,\end{displaymath}
and the running time is polynomial in $|I|$ and $\epsilon^{-1}$.

Here one might be prompted to consider the following sort of approximation: suppose we know a range in which the answer lies; can we say where in that range the answer lies? Is it in the top or bottom half of the range, or in which quartile? We shall see that this approach is unlikely to be feasible, and in Section \ref{AAsect} we present an alternative. Clearly this type of approximation depends on the nature of the range. For the moment let us restrict our attention to the class of functions in \#P. We will make the standard assumption that for a given NDTM $M$ there exists a fixed polynomial $p$ such that for any input $x$,  all certificates have size $p(|x|)$ (so the total number of possible certificates of $M$ is $2^{p(|x|)}$). We would like to answer the following problem, denoted  by $\pi_r$: given $r$, for which $k$ is the number of accepting certificates for $x$ between $\frac{(k-1)}{r}2^{p(|x|)}$ and $\frac{k}{r}2^{p(|x|)}$?

The problem $\pi_2$ is simply to determine which inputs have more than half of all certificates as accepting certificates. The set of languages in this class is exactly the set PP of probabilistic polynomial time languages. Furthermore, $\pi_2$ is clearly Turing reducible to $\pi_{2s}$, for any positive integer $s$, since if $\pi_{2s}(x)\leq s$ then $\pi_2(x)=1$, otherwise $\pi_2(x)=2$. Hence it is no surprise that this approach to approximation is NP-hard for \#SAT, indeed the following lemma shows that any attempt to approximate \#SAT in this way, or any problem with a parsimonious reduction to SAT, is unlikely to work. The proof is straightforward and we omit it, details may be found in~\cite{bor03c}.

\begin{lemma} \label{PP} For $k\in \mathbb{Z}$, deciding whether a CNF formula in $n$ literals has more than $2^{n-k}$ solutions is NP-hard. \end{lemma}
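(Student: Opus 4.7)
\medskip

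\noindent\textbf{Proof plan.} The plan is to give a polynomial-time many-one reduction from \textsc{Sat} to the stated problem. Given a CNF formula $\phi$ on $n$ variables with $s$ satisfying assignments, I would introduce $k$ fresh variables $y_1, \ldots, y_k$ (the interesting range is $k \geq 1$; for $k \leq 0$ the target threshold $2^{n-k}$ already exceeds $2^n$, so the problem is trivially ``no''). Then I would consider
\begin{equation*}
\phi'(x_1,\ldots,x_n,y_1,\ldots,y_k) \;=\; \phi(x) \,\vee\, (y_1 \wedge y_2 \wedge \cdots \wedge y_k).
\end{equation*}
This is not in CNF as written, but distributing gives $\phi' = \bigwedge_{i=1}^{k}\bigl(\phi(x) \vee y_i\bigr)$, and each disjunct $\phi(x) \vee y_i$ is itself CNF (append the literal $y_i$ to every clause of $\phi$). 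The total size is $k$ times the size of $\phi$, so the construction is polynomial.

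Next I would count solutions. The assignment $(x,y)$ satisfies $\phi'$ iff either $\phi(x) = 1$ (any $y$, contributing $s\cdot 2^k$) or $\phi(x) = 0$ together with $y_1 = \cdots = y_k = 1$ (contributing $(2^n - s)\cdot 1$). Hence
\begin{equation*}
\#\phi' \;=\; s\cdot 2^k + (2^n - s) \;=\; (2^k-1)s + 2^n.
\end{equation*}
Writing $n' := n + k$ for the number of variables of $\phi'$, the threshold in the problem statement is $2^{n'-k} = 2^n$, so
\begin{equation*}
\#\phi' \;>\; 2^{n'-k} \quad\Longleftrightarrow\quad (2^k - 1)s > 0 \quad\Longleftrightarrow\quad s \geq 1,
\end{equation*}
the last equivalence using $k \geq 1$. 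Thus $\phi'$ has more than $2^{n'-k}$ solutions iff $\phi$ is satisfiable, completing the reduction.

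There is no real obstacle here: the only mild point to check is that the distributive rewriting stays polynomial in size (which it does, since each clause of $\phi$ is augmented by a single literal, repeated $k$ times). The construction also makes transparent why the lemma should hold for \emph{every} $k \geq 1$ simultaneously, and in fact shows the stronger statement that the same reduction works for all such $k$ given as part of the input, so the problem remains NP-hard whether $k$ is fixed or varies with the instance.
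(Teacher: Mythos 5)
Your reduction is correct, and the arithmetic checks out: with $n' = n+k$ variables and $\#\phi' = (2^k-1)s + 2^n$, the threshold $2^{n'-k} = 2^n$ is crossed precisely when $s\geq 1$, giving a clean many--one reduction from \textsc{Sat} for every fixed $k \geq 1$. You are also right that the $k \leq 0$ case is degenerate (the answer is identically ``no'' since a formula on $n$ variables has at most $2^n$ satisfying assignments), which is really a small imprecision in the lemma's statement rather than in your argument. The paper itself omits the proof, deferring to the first author's thesis, so there is no text here to compare against; but the padding-plus-disjunction device you use is the natural one, and I would expect the thesis proof to be essentially this. One small caution on your closing remark: the construction produces $k$ fresh variables and $k$ augmented copies of $\phi$'s clause set, so treating $k$ as part of the input keeps the reduction polynomial only if $k$ is supplied in unary (or is otherwise polynomially bounded in the instance size); for the lemma as stated, with $k$ a fixed integer, this is a non-issue.
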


When $k=1$ the same decision for disjunctive normal form (DNF) formulae is equivalent to that for SAT, since the negation of a SAT formula is in DNF, and hence for an instance $F$, we have \#SAT$(F)=2^n- $\#DNF$(\overline{F})$, where $n$ is the number of literals. This observation leads to the following related lemma. Again, the proof is omitted and details may be found in~\cite{bor03c}.
\begin{lemma}For $k\in \mathbb{Z}$, deciding whether a DNF formula in $n$ literals has at least $2^{n-k}$ solutions is NP-hard. \end{lemma}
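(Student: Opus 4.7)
The plan is to reduce UNSAT to the DNF threshold problem, using the CNF/DNF complementation link already invoked in the $k=1$ discussion of the excerpt together with a small amount of padding to handle general $k$.

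Given a CNF instance $\Phi(x_1,\ldots,x_{n'})$ of SAT, I first apply De Morgan's laws to write $\overline{\Phi}$ as a DNF formula of the same size, whose satisfying assignments are exactly the non-satisfying assignments of $\Phi$. I then introduce $k$ fresh variables $y_1,\ldots,y_k$ and set
\[
G(x,y)\;=\;\overline{\Phi}(x)\,\wedge\,\overline{y_1}\wedge\overline{y_2}\wedge\cdots\wedge\overline{y_k},
\]
which is in DNF on $n:=n'+k$ literals once the $k$-fold conjunction is distributed into each disjunct of $\overline{\Phi}$. Counting is direct: $(x,y)$ satisfies $G$ iff $\Phi(x)=0$ and $y_1=\cdots=y_k=0$, so $\#G=2^{n'}-\#\Phi$. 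Since $n-k=n'$, the threshold satisfies $2^{n-k}=2^{n'}$, and hence
\[
\#G\geq 2^{n-k}\iff \#\Phi\leq 0\iff \Phi\in\mathrm{UNSAT}.
\]
An oracle for the DNF threshold problem thus decides UNSAT in polynomial time; since co-NP-hardness coincides with NP-hardness under polynomial-time Turing reductions, the problem is NP-hard.

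The point requiring care is the alignment of the threshold: the $k$ new variables multiply the total number of $(x,y)$-assignments by $2^k$ but leave $\#G$ unchanged (because they are pinned to zero), so the threshold $2^{n-k}$ lines up with the maximum possible value $2^{n'}$ of $\#\overline{\Phi}$, attained exactly when $\Phi$ is unsatisfiable. A tempting shortcut would be to take $G=\overline{F}$ for the CNF instance $F$ supplied by the previous lemma, but this works only in the case $k=1$, since $2^n-2^{n-k}$ is a power of two only then; the padding trick is what resolves this mismatch for arbitrary $k$.
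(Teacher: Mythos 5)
Your proof is correct and uses exactly the complementation observation the paper points to in the surrounding text (the negation of a CNF is a DNF, so DNF-counting is \#SAT in disguise), with the extra padding by $k$ pinned variables correctly resolving the fact that $2^n-2^{n-k}$ is a power of two only when $k=1$. The paper itself omits the proof and refers to the thesis, but your argument is a sound and natural implementation of the hinted approach, with the appropriate caveat that ``NP-hard'' is taken under polynomial-time Turing reductions so that hardness for UNSAT suffices.
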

This may seem more counterintuitive since not only is DNF in P, but also \#DNF has an FPRAS~\cite{kar89}. On the other hand, the next lemma shows that the number of  stable (independent) sets of vertices in a graph (\#SS) can be approximated in this way, even though it is \#P-complete and does not admit an FPRAS unless NP=RP. Essentially this is because the `natural' upper bound on the number of stable sets, $2^n$, is far too big unless the graph has very few edges. For details of the proof see~\cite{bor03c}.

\begin{lemma}  Let $G$ be a graph on $n$ vertices. For $r\in\mathbb{Z}$, determining for which $k$, $\#SS(G) \in \left[\frac{(k-1)}{r}2^{n},\frac{k}{r}2^{n}\right)$ is computable in time polynomial in $n$ and $r$. \end{lemma}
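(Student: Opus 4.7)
The plan is to exploit the fact that the bound $\#SS(G)\le 2^n$ is extremely loose unless $G$ has very few edges. The key quantitative input is that if $G$ contains a matching of size $m$, then
\[
\#SS(G)\;\le\;3^{m}\cdot 2^{\,n-2m}\;=\;(3/4)^{m}\,2^{n},
\]
because the two endpoints of each matching edge admit only three of the four possible occupation patterns and the $m$ matching edges are vertex-disjoint. So a large matching automatically certifies that $\#SS(G)/2^{n}<1/r$, while a small matching produces a small vertex cover that makes exact counting tractable.

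Concretely, I would first build a maximal matching $M$ of $G$ greedily, let $m=|M|$, and set the threshold $m^{*}=\lceil\log_{4/3}r\rceil$. If $m>m^{*}$, then $(3/4)^{m}<1/r$, so $\#SS(G)<2^{n}/r$ and the answer is $k=1$. Otherwise $m\le m^{*}$; and since $M$ is maximal, the set $C$ of its $2m$ endpoints is a vertex cover of $G$, so $V\setminus C$ is independent and $|C|=2m=O(\log r)$.

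In the second case I would count exactly by decomposing each stable set $T$ uniquely as $T=(T\cap C)\cup(T\cap(V\setminus C))$. The first piece $S:=T\cap C$ must be a stable set of the induced subgraph $G[C]$, and then, because $V\setminus C$ is already independent, $T\cap(V\setminus C)$ can be any subset of $V\setminus C$ disjoint from the neighbourhood $N(S)$, giving exactly $2^{|V\setminus C|-|N(S)\cap(V\setminus C)|}$ extensions. Hence
\[
\#SS(G)\;=\;\sum_{\substack{S\subseteq C\\ S\text{ stable in }G[C]}}2^{\,|V\setminus C|-|N(S)\cap(V\setminus C)|},
\]
a sum of at most $2^{|C|}=2^{O(\log r)}=r^{O(1)}$ explicitly computable terms, each evaluable in $\mathrm{poly}(n)$ time. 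Once $\#SS(G)$ has been computed exactly, reading off $k$ is immediate, and the total running time is polynomial in $n$ and $r$.

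There is no substantial obstacle beyond verifying the $(3/4)^{m}$ matching bound and noting that a greedy maximal matching is strong enough to play both roles: it provides the matching that drives the sparse-case bound, and its endpoints simultaneously form the vertex cover needed for the exact-counting case, so Edmonds' maximum-matching algorithm is not required.
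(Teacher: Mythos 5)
The paper defers this proof to~\cite{bor03c} (the first author's thesis), giving only the one-line intuition that the $2^n$ bound is far too loose unless $G$ is very sparse, so there is no paper proof to compare against line by line; but your argument is correct, complete, and realises exactly that intuition. The inequality $\#SS(G)\le 3^{m}2^{n-2m}=(3/4)^{m}2^{n}$ for a matching of size $m$ is right: each matched pair admits only three of the four occupation patterns and the $m$ pairs are vertex-disjoint, so the constraints are simultaneous. The dichotomy via a greedy maximal matching $M$ then makes this quantitative: if $|M|>\lceil\log_{4/3}r\rceil$ then $(3/4)^{|M|}<1/r$, hence $\#SS(G)<2^{n}/r$ and the output is $k=1$; otherwise the $2|M|=O(\log r)$ endpoints form a vertex cover $C$ (maximality of $M$), so $V\setminus C$ is independent and the exact expansion
$\#SS(G)=\sum_{S\subseteq C,\ S\ \mathrm{stable\ in}\ G[C]}2^{\,|V\setminus C|-|N(S)\cap(V\setminus C)|}$
has at most $2^{|C|}=2^{2\lceil\log_{4/3}r\rceil}=r^{O(1)}$ terms, each computable in $\mathrm{poly}(n)$ time, after which $k$ is read off directly. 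A point worth making explicit, which you use implicitly, is that matching number rather than edge count is the right sparsity parameter here --- a star has $n-1$ edges yet $\#SS(K_{1,n-1})=2^{n-1}+1$, so ``few edges'' alone would not give the dichotomy --- and a single greedy maximal matching conveniently serves both roles (certificate of large matching, or small vertex cover). I see no gap.
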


Lemma~\ref{PP} suggests that we cannot hope to fix a partition of the range and then determine in polynomial time in which section the answer lies; the difficulty associated with an NP-complete decision problem can be shifted to exactly the boundary between two parts of our partition. We therefore consider an alternative method of approximation which will meet our needs.

\subsection{Additive approximation}\label{AAsect}

Our approach to approximation consists of determining a small section of the range depending on the input, and in which we can say the answer lies with high probability. This gives rise to an additive approximation.

\begin{defin}\textbf{Additive Approximation (AA):}\label{AAdef}\\
Given any function $f:\mathcal{I}\mapsto\mathbb{C}$ and a normalisation $u:\mathbb{Z}^+\mapsto\mathbb{R}^+$, an additive approximation for $(f, u)$ is a probabilistic algorithm which given any $I\in\mathcal{I}, \epsilon>0$ produces an output $\hat{f}(I)$, such that
$$ \mathbf{Pr}[|f(I)-\hat{f}(I)|>\epsilon u(|I|)]<1/4,$$
in time polynomial in $|I|$ and $\epsilon^{-1}$.
\end{defin}
Note that the $1/4$ in the definition could be replaced by any $\delta\in(0,1/2)$, since we could reduce this error probability in polynomial time by taking several runs of the algorithm. Note also that most of the time we shall be considering the case where $f$ is real. In contrast to
the set of functions admitting an FPRAS, which is closed under
addition but not under subtraction (e.g. \#DNF$(f)$ has an FPRAS,
but \#SAT$(f)=2^n-\#$DNF$(\bar{f})$ does not), we have the following result whose proof we leave to the reader.
\begin{proposition}\label{add} Suppose $(f,u)$ and $(g,v)$ admit AA
algorithms, then there
exists AA algorithms for  $(-f,u)$, $(f+g,u+v)$ and $(f-g,u+v)$. If, in addition, $|f(I)|\leqslant u(|I|)$ and $|g(I)|\leqslant v(|I|)$ for all $I$, then  there is an AA algorithm for $(fg,uv)$.\end{proposition}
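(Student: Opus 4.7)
\medskip
\noindent\textbf{Proof plan.}
The plan is to treat each of the four operations in turn, producing the candidate output from one or two calls to the given AA algorithms and then bounding the error deterministically while controlling the failure probability by a union bound.

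For $(-f,u)$, simply run the AA algorithm for $(f,u)$ on input $I$ with parameter $\e$, obtaining $\hat{f}(I)$, and return $-\hat{f}(I)$; the error bound $|{-}f(I)-({-}\hat{f}(I))|=|f(I)-\hat{f}(I)|\le \e u(|I|)$ holds with probability at least $3/4$. For $(f+g,u+v)$ and $(f-g,u+v)$, first boost each of the two AA algorithms so that its failure probability is at most $1/8$ (taking the median of $O(1)$ independent runs suffices, staying polynomial time). Then on input $(I,\e)$ call each boosted algorithm with parameter $\e$, obtaining $\hat f(I),\hat g(I)$; by the union bound both estimates succeed with probability at least $3/4$, and in that event
\[
|(f(I)\pm g(I))-(\hat f(I)\pm\hat g(I))|\le \e u(|I|)+\e v(|I|)=\e(u+v)(|I|),
\]
as required.

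For the product $(fg,uv)$, under the boundedness hypotheses $|f(I)|\le u(|I|)$ and $|g(I)|\le v(|I|)$, I would run the boosted AA algorithms for $f$ and $g$ each with accuracy parameter $\e'=\e/3$ (still polynomial in $|I|$ and $\e^{-1}$), obtaining $\hat f,\hat g$ with $|\hat f-f|\le \e' u$ and $|\hat g-g|\le \e' v$ simultaneously with probability at least $3/4$. Using the identity
\[
\hat f\hat g-fg=\hat f(\hat g-g)+(\hat f-f)g
\]
together with $|\hat f|\le(1+\e')u$ and $|g|\le v$, one gets
\[
|\hat f\hat g-fg|\le (1+\e')u\cdot\e' v+\e' u\cdot v=uv\bigl(2\e'+(\e')^{2}\bigr)\le \e\, u(|I|)v(|I|),
\]
the last step holding whenever $\e\le 1$ (the case $\e>1$ is trivial since $|fg|\le uv$ already). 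Thus $\hat f\hat g$ is an AA for $(fg,uv)$.

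The argument is essentially routine; the only step requiring a little thought is the product case, where one needs the boundedness hypotheses both to control $|g|$ directly and to control $|\hat f|$ via the approximation bound, so that the cross term $\hat f(\hat g-g)$ does not blow up. The constant $1/3$ in the choice $\e'=\e/3$ is just a convenient value making the linear-in-$\e'$ contribution dominate; any fixed rescaling that absorbs the quadratic term into $\e$ works identically, and the boosting-plus-union-bound step handles the probability of success uniformly for all four operations.
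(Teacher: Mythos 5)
Your proof is correct, and the paper explicitly leaves this proposition's proof to the reader, so there is no written proof in the paper to compare against; your argument (negation by symmetry, sum and difference via boosting plus a union bound, product via the identity $\hat f\hat g-fg=\hat f(\hat g-g)+(\hat f-f)g$ using the boundedness hypotheses) is the standard one the authors intend. One cosmetic remark: with $\e'=\e/3$ the bound $2\e'+(\e')^2\le\e$ in fact holds for all $\e\le 3$, so the restriction to $\e\le 1$ is harmless but unnecessary, and the boosting step is already sanctioned by the remark following Definition~\ref{AAdef}.
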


The normalisation is  crucial. Since we are most interested in determining where in the range of possible values the answer lies, we shall usually be taking $u$ to be an upper bound on $|f|$ depending only on input size. An additive approximation allows errors up to an absolute value of $\epsilon u(|I|)$, whereas an FPRAS allows only errors up to an absolute value of $\epsilon f(I)$. It is therefore a weaker notion of approximation, and it is easy to check that any function that admits an FPRAS also admits an AA algorithm under any upper bound.
\begin{lemma} Let $f:\mathcal{I}\mapsto \mathbb{R}$ be a function that admits an FPRAS, and let $u:\mathbb{Z}^{+}\mapsto\mathbb{R}$ satisfy $|f(I)|\leqslant u(|I|)$ for all inputs $I\in \mathcal{I}$. Then $(f,u)$ has an AA algorithm.\end{lemma}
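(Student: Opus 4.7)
\medskip

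\noindent\textbf{Proof proposal.} The strategy is essentially to run the FPRAS for $f$ with the \emph{same} accuracy parameter $\epsilon$ and observe that a relative error of $\epsilon f(I)$ is automatically no worse than the absolute error $\epsilon u(|I|)$ required by Definition~\ref{AAdef}, because $|f(I)|\leqslant u(|I|)$ by hypothesis.

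More precisely, given an input $I\in\mathcal{I}$ and $\epsilon>0$, I would invoke the FPRAS for $f$ on input $(I,\epsilon)$ to obtain a value $\hat{f}(I,\epsilon)$. By the definition of an FPRAS, this runs in time polynomial in $|I|$ and $\epsilon^{-1}$, and
$$\mathbf{Pr}\bigl[\,|\hat{f}(I,\epsilon)-f(I)|>\epsilon\,|f(I)|\,\bigr]<1/4.$$
(We use $|f(I)|$ rather than $f(I)$ so the statement makes sense for real-valued $f$; this is the only point where the sign matters, and the usual FPRAS definition for non-negative $f$ is a special case.) Combining with $|f(I)|\leqslant u(|I|)$ gives
$$\mathbf{Pr}\bigl[\,|\hat{f}(I,\epsilon)-f(I)|>\epsilon\,u(|I|)\,\bigr]\leqslant\mathbf{Pr}\bigl[\,|\hat{f}(I,\epsilon)-f(I)|>\epsilon\,|f(I)|\,\bigr]<1/4,$$
which is exactly the condition required for $\hat{f}(I,\epsilon)$ to serve as the output of an AA algorithm for $(f,u)$.

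The running time of this AA algorithm is the same as that of the FPRAS, hence polynomial in $|I|$ and $\epsilon^{-1}$, as required. There is no real obstacle: the argument is a one-line monotonicity observation on the failure event. The only mild subtlety worth flagging is the convention around the sign of $f$ in the FPRAS definition, which is handled by interpreting the relative error bound as $\epsilon|f(I)|$; with this interpretation the inclusion of events used above is immediate and the lemma follows.
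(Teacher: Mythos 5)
Your proof is correct, and it is the argument the paper has in mind: the lemma is stated immediately after the remark that ``it is easy to check'' and the paper omits the proof entirely, so there is nothing to diverge from. You correctly observe that running the FPRAS with the same $\epsilon$ suffices, since $|f(I)|\leqslant u(|I|)$ makes the AA failure event a subset of the FPRAS failure event, and you rightly flag (and resolve) the minor point that the paper's FPRAS definition implicitly assumes $f\geqslant 0$, so one should read the relative-error bound as $\epsilon|f(I)|$ for real-valued $f$.
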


Note also that a given function will have an AA with respect to some normalisations but not others. For example we show later that for the number of proper three colourings of a connected graph $G$ on $n$ vertices, $P_G(3)$ where $P_G$ is the chromatic polynomial of $G$, we have an AA for $(P_G(3),2^{n})$. However for any constant $\delta>0$, $(P_G(3),(2-\delta)^{n})$, does not have an AA unless NP=RP (Theorem~\ref{bestnormcol}). In other words we can determine $P_G(3)$ to within an additive error $\epsilon 2^n$ in polynomial time, but we cannot approximate to within an additive error $\epsilon (2-\delta)^n$. Note that if $(f(I),u(|I|))$ has an AA, then for any fixed polynomial $p$, $(f(I),u(|I|)/p(|I|))$ also does, since we can absorb the polynomial factor in the normalisation into $\epsilon$ at only a polynomial slowing of the algorithm.

It is the determination of the `best' normalisation for a given function that causes the
greatest difficulties, particularly in relation to approximating
$V_L(t)$. Nevertheless our first positive result shows that any
function belonging to \#P does have an AA algorithm under very natural
 normalisations.


\section{Additive approximations for \#P functions}
The class of functions which constitute \#P can be regarded
as the set of functions that count certificates of membership of a
language belonging to NP. For a given NP-language $L$ there will be
infinitely many NDTM's which check membership of $L$, and the certificates for a given input $I$ will depend on the machine used in verification.

The main result of this section is that all such counting functions
have additive approximation schemes under the `natural normalisation'
associated with the corresponding NDTM. For example if we take $f(G)$ to be the
number of Hamiltonian circuits in a graph $G$, then two possible NDTM's for checking
membership of $L$ are $M_1$ which takes
as certificates subsets of the edges, and checks that these form a
cycle of length $|V|$, and $M_2$ which takes as certificates an ordering of
the vertices $v_1,v_2,\ldots,v_n$ and checks that the edges between any two adjacent
vertices in the ordering, and between $v_n$ and $v_1$, do appear in the graph, (to avoid double
counting we must
insist that relative to some fixed ordering of the vertices
$\pi:V\mapsto 1,\ldots,n$, we have $\pi(v_1)=1$ and $\pi(v_2)<\pi(v_n)$). In each case the number of good certificates
for a given graph $G$ is exactly the number of Hamiltonian circuits of
$G$, however $M_1$ has $2^{|E(G)|}$ possible certificates,
while $M_2$ has $(|V|-1)!/2$ possible certificates. In either case the
number of possible certificates is a natural upper bound on the number of
Hamiltonian circuits. We show below that there is an additive
approximation algorithm under the normalisation associated with any such bound.

\begin{theorem} \label{sample} Let $f$ be a function  in the class
$\#P$, with an associated NDTM $M$, so that for a given instance $I$,
$M$ has $f(I)$ accepting certificates, each of length $p(|I|)$. Then
there exists an additive approximation algorithm for $(f,2^{p(|I|)})$ that runs in time polynomial in $|I|$ and $ \epsilon^{-1}$. \end{theorem}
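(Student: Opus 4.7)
The plan is to use straightforward Monte Carlo sampling. The key observation is that the ratio $f(I)/2^{p(|I|)}$ is precisely the probability that a uniformly random binary string of length $p(|I|)$ is an accepting certificate of $M$ on input $I$. Since $M$ is an NDTM, the predicate ``$c$ is an accepting certificate for $I$'' is checkable in time polynomial in $|I|$, so this probability can be estimated by empirical frequency.

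Concretely, I would set $N = \lceil 2/\epsilon^2 \rceil$ (or some similar polynomial in $\epsilon^{-1}$), sample strings $c_1, \ldots, c_N \in \{0,1\}^{p(|I|)}$ independently and uniformly at random, and let $X_j = 1$ if $c_j$ is an accepting certificate and $0$ otherwise. The output of the algorithm is
\[
\hat{f}(I) = \frac{2^{p(|I|)}}{N} \sum_{j=1}^{N} X_j.
\]
Each $X_j$ is a Bernoulli variable with mean $f(I)/2^{p(|I|)} \in [0,1]$, so $\mathbb{E}[\hat{f}(I)] = f(I)$ and the $X_j$ are independent with values in $[0,1]$.

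To bound the error probability I would invoke a standard concentration inequality, e.g.\ Hoeffding's inequality, applied to the normalised variables $X_j$. This yields
\[
\mathbf{Pr}\!\left[\left|\frac{1}{N}\sum_{j=1}^{N} X_j - \frac{f(I)}{2^{p(|I|)}}\right| > \epsilon\right] \leqslant 2 e^{-2N\epsilon^2},
\]
and multiplying through by $2^{p(|I|)}$ converts this into exactly the additive-approximation bound $\mathbf{Pr}[|\hat{f}(I) - f(I)| > \epsilon\, 2^{p(|I|)}] < 1/4$, provided $N$ is chosen so that $2e^{-2N\epsilon^2} < 1/4$; $N = \lceil 2\epsilon^{-2}\rceil$ suffices. (Chebyshev would also work, with a slightly worse constant.)

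Finally, the running time is $N$ times the cost of a single verification step, both of which are polynomial in $|I|$ and $\epsilon^{-1}$, giving the claimed complexity. There is essentially no technical obstacle here: the only point requiring any care is the mild bookkeeping that $\hat{f}(I)$ need not be an integer while $f(I)$ is, but the definition of AA places no integrality demand on $\hat{f}$. The strength of the theorem really lies not in the proof but in the normalisation $2^{p(|I|)}$, which the later results (Theorems~\ref{log} and \ref{bestnormcol}) show cannot in general be reduced without losing the approximation.
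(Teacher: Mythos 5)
Your proposal is essentially identical to the paper's proof: both sample certificates uniformly, form the scaled empirical-frequency estimator $\hat f(I)=\frac{2^{p(|I|)}}{N}\sum X_j$, and invoke a standard tail bound to control the deviation. The only cosmetic difference is that the paper uses Chebyshev with $t=4\epsilon^{-2}+1$ whereas you use Hoeffding with $N=\lceil 2\epsilon^{-2}\rceil$, a variation you yourself note.
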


\begin{proof} Given an instance $I$ of $f$, we will select $t$ computation paths, or certificates, uniformly at random from the $2^{p(|I|)}$ possible. We then run $M$ using these inputs, and let $X_i, i=1\ldots t$ be indicator functions which take value 1 if and only if the $i^{th}$ computation path accepts $I$. The estimator for $f(I)$ is then $X=\frac{2^{p(|I|)}}{t}\sum_{i=1}^t X_i $.

Clearly $\textbf{E}[X]=f(I)$. It remains to show that we can select $t$ only polynomially large, such that the error bounds given in Definition~\ref{AAdef} are satisfied. First note from Chebyshev's inequality that
\begin{eqnarray*} \mathbf{Pr}\big[|X-f(I)| \geqslant \epsilon2^{p(|I|)} \big] &\leqslant& \frac{\mathbf{Var}(X)}{\epsilon^2 2^{2p(|I|)}}\\
&\leqslant& \frac{1}{t} \frac{ \frac{f(I)}{2^{p(|I|)}}(1-\frac{f(I)}{2^{p(|I|)}})}{\epsilon^2 }\\
&\leqslant& \frac{1}{t\epsilon^2}. \end{eqnarray*}
Now if  $t=4\epsilon^{-2}+1$, we have
$$\mathbf{Pr}\big[|X-f(I)| \geqslant \epsilon2^{p(|I|)} \big]  < 1/4.$$
\end{proof}

Turning briefly to GapP functions, we get the following immediate corollary.
\begin{theorem} \label{Gapsample} Let $f$ be a GapP function such that $f=g-h$ where $ g,h\in$\#P.

\begin{enumerate}

\item Suppose that there are additive approximations for $(g,u)$ and $(h,v)$, then there is an additive approximation scheme for $(f,\max \{u,v\})$;
\item Suppose that $g(I)$ and $h(I)$ have certificates of length $p(|I|)$ for all $I$, then there is an additive approximation scheme for $(f,2^p)$. \end{enumerate}
\end{theorem}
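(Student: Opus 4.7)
The plan is to reduce both parts to the closure result of Proposition~\ref{add}, with part~(2) also invoking Theorem~\ref{sample}; no new analytic work is needed.

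For part~(1), I start from the assumed AA algorithms for $(g,u)$ and $(h,v)$. Proposition~\ref{add} immediately gives an AA algorithm for $(g-h,u+v)=(f,u+v)$. To sharpen the normalisation to $\max\{u,v\}$, I use $u(|I|)+v(|I|)\leqslant 2\max\{u(|I|),v(|I|)\}$ and run the $(f,u+v)$ algorithm with accuracy parameter $\epsilon/2$; the resulting estimator $\hat f(I)$ satisfies
$$|\hat f(I)-f(I)|\leqslant (\epsilon/2)(u(|I|)+v(|I|))\leqslant \epsilon\max\{u(|I|),v(|I|)\}$$
with probability at least $3/4$, which is exactly an AA for $(f,\max\{u,v\})$, at the cost of a constant-factor slowdown.

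For part~(2), the assumption that $g$ and $h$ have certificates of common length $p(|I|)$ lets me apply Theorem~\ref{sample} directly to each, yielding AA algorithms for $(g,2^{p(|I|)})$ and $(h,2^{p(|I|)})$. Part~(1) with $u=v=2^{p(|I|)}$ then produces an AA for $(f,2^{p(|I|)})$, since $\max\{2^p,2^p\}=2^p$.

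There is no substantive obstacle: the entire argument is a standard exercise in distributing the error budget $\epsilon$ between two subroutines and rescaling the normalisation by a bounded factor. The real content of the theorem is conceptual, namely that the subtractive flexibility defining GapP is absorbed cleanly by additive (rather than multiplicative) approximation, so no analog of the \#DNF/\#SAT asymmetry discussed earlier in the paper arises in this setting.
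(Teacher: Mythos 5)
Your proof is correct and matches the paper's argument exactly: both parts reduce to Proposition~\ref{add} plus the observation that halving the error parameter absorbs the factor of two gap between $u+v$ and $\max\{u,v\}$, and part~(2) invokes Theorem~\ref{sample} to supply the required AA schemes for $g$ and $h$ before applying part~(1).
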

\begin{proof}  From Proposition~\ref{add} we have that there is an additive approximation for $(f,u+v)$. From Definition~\ref{AAdef} we can halve the permitted error for only a polynomial increase in running time, hence there is an AA for $(f,\frac{u+v}2)$ and therefore also  $(f,\max \{u,v\})$, which gives (i). When $g$ and $h$ have certificates of length $p$,  by Theorem~\ref{sample} there are AA schemes for $(g,2^p)$ and $(h,2^p)$, (ii) now follows from (i).\end{proof}

 We have seen that all functions contained in \#P have an AA algorithm relative to normalisation by the size of the certificate space, and it
 is reasonable to ask if we could do better. However, we give two results that suggest this is already the best we can do in general. First we will see that sharpening our approximation to a logarithmic
 scale for the number of proper 5-colourings of a graph is NP-hard. Secondly, we show that the normalisation by the number of possible certificates cannot be improved significantly in the case of the number of $k$-colourings of a graph.

\begin{theorem}\label{log}
Let $P_G(5)$ be the number of proper 5-colourings of a graph. Then
for a general graph $G$ on $n$ vertices, there cannot be an additive approximation algorithm for $(\log (P_G(5)+1),3n)$ that runs in time polynomial in $n$ and $ \epsilon^{-1}$ unless  NP=RP.
\end{theorem}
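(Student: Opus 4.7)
The plan hinges on exploiting the coarseness of a logarithmic additive approximation. The key dichotomy is that $P_G(5) = 0$ when $G$ is not $5$-colourable, and $P_G(5) \geq 1$ otherwise, so $\log(P_G(5)+1)$ is either exactly $0$ or at least $\log 2$. Since the normalisation $3n$ is polynomial in the input size $n$, and the purported additive approximation runs in time polynomial in $\epsilon^{-1}$, we are free to set $\epsilon$ to any $1/\mathrm{poly}(n)$ and retain polynomial running time.

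Concretely, I would invoke the hypothetical AA algorithm on input $G$ with $\epsilon := \log(2)/(12n)$ (any value $o(1/n)$ works). By Definition~\ref{AAdef}, the algorithm produces $\hat{L}$ with
\begin{equation*}
\mathbf{Pr}\bigl[\,|\hat{L} - \log(P_G(5)+1)| > 3n\epsilon\,\bigr] < \tfrac{1}{4},
\end{equation*}
and $3n\epsilon = \log(2)/4$. Thus with probability at least $3/4$ one has $\hat{L} < \log(2)/4$ when $G$ is not $5$-colourable, and $\hat{L} > 3\log(2)/4$ when $G$ is $5$-colourable. Thresholding $\hat{L}$ at $\log(2)/2$ therefore decides $5$-colourability with bounded two-sided error, placing the $5$-colourability problem in BPP.

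Since $5$-colourability is NP-complete (as $k$-colourability is NP-complete for every $k \geq 3$), this gives $\mathrm{NP} \subseteq \mathrm{BPP}$. To upgrade this to $\mathrm{NP} = \mathrm{RP}$, I would invoke the standard consequence of self-reducibility: if SAT admits a BPP algorithm, then one can find (or correctly report the nonexistence of) a satisfying assignment bit-by-bit with one-sided error, boosting confidence by repetition at each step. Combined with the trivial inclusion $\mathrm{RP} \subseteq \mathrm{NP}$, this yields $\mathrm{NP} = \mathrm{RP}$, contradicting the hypothesis.

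There is no serious technical obstacle here; the proof is essentially a calibration argument. The only point requiring care is ensuring the reduction is valid for a \emph{polynomial} normalisation, but since $3n$ grows polynomially we can afford $\epsilon = \Theta(1/n)$ without leaving polynomial time, giving constant additive error on the logarithmic scale — just enough to separate the two possible values of $\log(P_G(5)+1)$ at the threshold between $5$-colourable and not.
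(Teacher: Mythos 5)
Your proof is correct, and it takes a genuinely different route from the paper's. Both arguments turn on the same dichotomy — $\log(P_G(5)+1)$ is exactly $0$ when $G$ is not $5$-colourable and at least $\log 2$ otherwise — but they amplify this signal in dual ways. The paper pads $G$ to $G^+$ by adjoining $n$ isolated vertices, so that the two cases become $0$ versus $>2n$ on a graph with $2n$ vertices (normalisation $6n$); a \emph{constant} $\epsilon < 1/3$ then separates them. You instead leave $G$ alone and shrink $\epsilon$ to $\Theta(1/n)$, which the definition of an AA explicitly permits since the running time is polynomial in $\epsilon^{-1}$; the additive error $3n\epsilon$ then drops below $\log(2)/2$, separating the two cases directly. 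Your approach is arguably more elementary — no auxiliary construction, no lower bound $P_{G^+}(5)\geq 5^n\cdot 5!$ to verify — and it exploits the remark in the paper (just after Definition~\ref{AAdef}) that polynomial factors in the normalisation can be absorbed into $\epsilon$. The paper's padding argument has the compensating virtue of not relying on $\epsilon$-shrinking at all, so it would survive under a more restrictive definition in which $\epsilon$ is held to a fixed constant. Your final step, passing from a BPP algorithm for $5$-colourability to $\mathrm{NP}=\mathrm{RP}$ via self-reducibility, is the standard argument and is the same one the paper implicitly invokes.
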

\begin{proof}  Consider a NDTM for $P_G(5)$ that takes as certificates any 5-colouring of the graph, hence the certificates are of length $n\lceil \log 5 \rceil=3n$, where $n$ is the number of vertices in $G$. We show that an AA algorithm for $(\log (P_G(5)+1),3n)$ would be able to solve the NP-complete problem of determining whether a graph is 5-colourable.

Let $G$ be a graph on $n$ vertices and consider the following polynomial time transformation. We form $G^+$ by adding $n$ isolated vertices to $G$. If $G$ is not 5-colourable, then nor is $G^+$. However, if $G$ is 5-colourable, each 5-colouring can be extended to $5^{n}$ 5-colourings of $G^+$. Therefore, if $G$ is not 5-colourable $\log (P_{G^+}(5)+1)=0.$ Whereas, if $G$ is 5-colourable \begin{eqnarray*}
 \log (P_{G^+}(5)+1) &\geq& \log (5^{n}.5!+1)\\
&>& 2n.\end{eqnarray*}
Hence an additive approximation algorithm for $(\log (P_{G^+}(5)+1),6n)$ could determine whether $G$ is 5-colourable or not in random polynomial time. \end{proof}
Theorem~\ref{sample} shows that for connected graphs there exists an AA algorithm for $(P_G(k),(k-1)^n)$ as follows. We can take an arbitrary spanning tree on $G$, and take the set of certificates to define colourings relative to this spanning tree, giving $k(k-1)^{n-1}$ possible certificates. We can then adjust the normalisation by a constant factor $(k-1)/k$. We show that this cannot be improved, in the sense that the normalisation (and therefore the error) cannot be reduced by any exponential factor. We have already noted that the normalisation can be improved by any fixed polynomial factor.
\begin{theorem} \label{bestnormcol}
If NP $\neq$ RP then for any fixed $k \geq 3, \delta>0$ there cannot be a polynomial time AA algorithm for $(P_{G}(k),\phi(n))$ for connected graphs $G$ on $n$ vertices, for any function $\phi(n)$ of order $O((k-1-\delta)^n)$.
\end{theorem}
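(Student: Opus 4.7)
The strategy is a randomized reduction from $k$-colorability of connected graphs, an NP-complete problem for every fixed $k\geq 3$, to the hypothesized AA algorithm for $(P_G(k),\phi(n))$. Given a connected graph $G$ on $n$ vertices whose $k$-colorability we wish to decide, fix a vertex $v$ of $G$ and form $G'$ by attaching $m$ new pendant vertices, each joined by a single edge to $v$. Then $G'$ is connected on $n'=n+m$ vertices, and because each pendant independently takes any of the $k-1$ colors differing from that of $v$, we have $P_{G'}(k)=P_G(k)(k-1)^m$. Consequently $P_{G'}(k)=0$ when $G$ is not $k$-colorable, while $P_{G'}(k)\geq (k-1)^m$ otherwise.

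The plan is to pick $m$ linearly in $n$ so that $(k-1)^m$ exponentially outgrows $\phi(n')$. Writing $\phi(n)\leq K(k-1-\delta)^n$ and setting $m=\lceil cn\rceil$,
\[
\frac{(k-1)^m}{\phi(n')}\;\geq\;\frac{1}{K}\left[\frac{(k-1)^c}{(k-1-\delta)^{1+c}}\right]^n .
\]
We may assume $k-1-\delta>1$: otherwise $\phi$ is bounded by a constant, so an AA with sufficiently small constant $\epsilon$ recovers the integer $P_G(k)$ exactly and the conclusion is immediate. Under this assumption the bracketed ratio exceeds $1$ whenever $c>\log(k-1-\delta)/\log\!\bigl((k-1)/(k-1-\delta)\bigr)$, a finite constant depending only on $k$ and $\delta$; fix any such $c$.

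Now run the hypothesized AA on $G'$ with, say, $\epsilon=1/4$. It returns $\hat P_{G'}(k)$ with $|\hat P_{G'}(k)-P_{G'}(k)|\leq \phi(n')/4$ with probability at least $3/4$, in time polynomial in $n'=O(n)$. For $n$ large enough the ratio estimate gives $(k-1)^m>\phi(n')$, so comparing $\hat P_{G'}(k)$ to the threshold $\phi(n')/2$ correctly separates the two cases $P_{G'}(k)=0$ and $P_{G'}(k)\geq (k-1)^m$, thereby deciding $k$-colorability of $G$ in random polynomial time and contradicting $\mathrm{NP}\neq\mathrm{RP}$. The only delicate step is the choice of the constant $c$; the finitely many small $n$ not covered by the asymptotic estimate are handled by brute force.
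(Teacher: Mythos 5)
Your proof is correct and takes essentially the same approach as the paper: both pad $G$ with a linear number of extra tree-like vertices attached at a single vertex (you use pendants, the paper uses a path), so that $P$ gains a factor of $(k-1)$ per added vertex and thereby outgrows the $(k-1-\delta)^n$ normalisation, and then both invoke NP-hardness of $k$-colourability. The only minor stylistic difference is that the paper calibrates the padding so that the additive error drops below $1/2$ and recovers $P_G(k)$ exactly by rounding, whereas you threshold directly at $\phi(n')/2$ to decide colourability; both routes contradict $\mathrm{NP}\neq\mathrm{RP}$.
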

\begin{proof}  Let $\phi(n) \leqslant c (k-1-\delta)^n$ for sufficiently large $n$. Take $r$ such that $(k-1-\delta)\leq (k-1)^{1-1/r}.$ Given any graph $G$, we form a graph $H$ by attaching a path of length $n(r-1)$ to a vertex of $G$. Now \begin{eqnarray}P_G(k) (k-1)^{n(r-1)}&=&P_H(k)\\
P_G(k)&=&\frac{P_H(k)}{((k-1)^{1-1/r})^{nr}}\label{gequalsh}\\
P_G(k)&=&\frac{c P_H(k)}{\phi(nr)}\frac{\phi(nr)}{c ((k-1)^{1-1/r})^{nr}}.\label{philesshalf}\end{eqnarray}
Now suppose that there is an AA algorithm for  $(P_H(k),\phi(|H|))$, then we can get an approximation $\hat{P_H}(k)$  within an additive error of $\frac{1}{2c} \phi(nr)$. Using $\hat{P_H}(k)$ and equation~(\ref{gequalsh}), we obtain an approximation $\hat{P_G}(k)$. By equation~(\ref{philesshalf}), $\hat{P_G}(k)$ is within an additive error of 1/2, since  $$\frac{\phi(nr)}{c ((k-1)^{1-1/r})^{nr}} \leqslant \frac{\phi(nr)}{c (k-1-\delta)^{nr}} \leqslant 1.$$
Since $P_G(k)$ is integral, we can therefore determine it exactly.
\end{proof}


\section{Approximating $V_L(t)$ and related quantities}\label{TP}
We have seen in Section~\ref{jones} that our primary problem is to decide whether or not there exists an additive approximation scheme for $(V_L(e^{2\pi i/5}),[2]_5^{m/2})$ where $L$ is the plat closure of a braid on $m$ strings, indeed an additive approximation for the absolute value of the Jones polynomial suffices. We make this precise in the following theorem.

\begin{theorem}\label{bqpeqpa} Let $A$ be a oracle which takes as input a braid $b$ on $m$ strings and $\epsilon>0$, and returns an additive approximation for $(|V_L(e^{2\pi i /5})|,[2]_5^{m/2})$, where $L$ is the plat closure of $b$. Then $\mathrm{BQP} = \mathrm{P}^A.$ \end{theorem}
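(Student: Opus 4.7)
The plan is to prove both containments. For $\mathrm{BQP} \subseteq \mathrm{P}^A$, I would invoke the Solovay--Kitaev theorem together with the Freedman--Larsen--Wang density theorem (as summarised in Section~\ref{jones}) to convert any BQP algorithm acting on instance $\mathcal{I}$ into a braid $b$ on $m$ strings whose topological quantum computation approximates the BQP acceptance probability to within $1/8$. As recorded in the identities just before the theorem, if $L$ denotes the plat closure of $b$, the topological acceptance probability equals $|\langle c \mid p(b) \mid c\rangle|^2 = |V_L(e^{2\pi i/5})|^2/[2]_5^{m}$. Querying the oracle on $(b,\tfrac{1}{16})$ returns $\hat q$ with
\[
\bigl|\hat q - |V_L(e^{2\pi i/5})|\bigr| \leq \tfrac{1}{16}[2]_5^{m/2}
\]
with probability $\geq 3/4$. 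Since both $\hat q/[2]_5^{m/2}$ and $|V_L(e^{2\pi i/5})|/[2]_5^{m/2}$ lie in $[0,1]$, the bound $|x^2-y^2|\leq 2|x-y|$ gives an approximation to the topological acceptance probability within $1/8$; combined with the Solovay--Kitaev error, the total additive error is $\leq 1/4$, which is enough to distinguish the $3/4$ vs.\ $1/4$ BQP thresholds. Classical post-processing handles the writhe and component-count factors required if one instead uses formula~(\ref{FKLWeqn1}), since they are polynomially computable from $b$.

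For the reverse containment $\mathrm{P}^A \subseteq \mathrm{BQP}$, I would show that a single oracle call can be simulated in BQP. Given a braid $b$ on $m$ strings and $\e > 0$, the BQP machine first compiles (in classical polynomial time) the corresponding topological quantum computation into a standard quantum circuit using the explicit Jones representation $p$; universality of the circuit model, invoked in the reverse sense of what was used above, makes this routine. A single run gives a Bernoulli sample whose success probability is $p := |V_L(e^{2\pi i/5})|^2/[2]_5^{m} \in [0,1]$. Performing $N = \Theta(\e^{-4})$ independent trials and taking the empirical frequency $\hat p$, Hoeffding's inequality yields $|\hat p - p| \leq \e^2/4$ with probability $\geq 3/4$. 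The elementary inequality $|\sqrt x - \sqrt y| \leq \sqrt{|x-y|}$ for $x,y \geq 0$ then implies
\[
\bigl|[2]_5^{m/2}\sqrt{\hat p} - |V_L(e^{2\pi i/5})|\bigr| \leq [2]_5^{m/2}\cdot \e/2,
\]
which is the required additive approximation (within a factor better than $\e$).

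The main obstacle is simply careful bookkeeping: tracking the normalisation $[2]_5^{m/2}$ on the amplitude $\langle c\mid p(b)\mid c\rangle$ versus $[2]_5^{m}$ on the probability $|\langle c\mid p(b)\mid c\rangle|^2$, and propagating additive errors through the squaring and square-root operations. Everything conceptual has effectively been done in Section~\ref{jones}; the theorem essentially formalises the assertion there that an additive approximation to $|V_L(e^{2\pi i/5})|$ in the range $[0,[2]_5^{m/2}]$ is at once what a topological quantum computer naturally provides and exactly what is needed to simulate an arbitrary BQP computation.
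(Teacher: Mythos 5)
Your reverse containment $\mathrm{P}^A \subseteq \mathrm{BQP}$ is sound and essentially matches the paper: simulate the topological quantum computation for the plat closure of $b$, sample the vacuum probability $|V_L(e^{2\pi i/5})|^2/[2]_5^m$ to within $\epsilon^2$, and take square roots (the paper uses the same Prob$(|0\rangle)=|\langle c\,|\,p(b)\,|\,c\rangle|^2$ identity and the same $\epsilon^2\to\epsilon$ trick). However, your forward containment $\mathrm{BQP}\subseteq\mathrm{P}^A$ has a genuine gap, and it is precisely the part of the argument where all the work in the paper's proof lives. You treat the quantity $|V_L(e^{2\pi i/5})|^2/[2]_5^m$ (for $L$ the plat closure of the braid $b$ approximating the circuit) as ``the topological acceptance probability'' and claim it approximates the BQP acceptance probability. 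That is not correct: that quadratic expression is the probability that \emph{all} $m/2$ pairs fuse to vacuum in the plat closure of $b$, not the probability of a single-qubit measurement outcome, which is what a BQP decision needs. Theorem~\ref{FKLWtheorem} encodes the single-qubit measurement by inserting the loop $\gamma$ and working with the plat closure $L$ of $b^{-1}\gamma b$, and Equation~(\ref{FKLWeqn1}) is \emph{linear} (not quadratic) in $V_L(e^{2\pi i/5})$, with a phase factor $(-1)^{c(L)+w(L)}(-a)^{3w(L)}$ in front. The oracle hands you only $|V_L|$, and it is not immediate that a bound on the probability in~(\ref{FKLWeqn1}) translates to a bound on $|V_L|$ — that step requires observing that the phase-corrected $V_L$ must be (approximately) real because the probability is real, and then combining the threshold $\mathbf{Pr}(0)<1/4$ (resp. $>3/4$) with the constraint $\mathbf{Pr}(0)\in[0,1]$ to pin $|V_L|$ below roughly $0.39\,[2]_5^{|x|+1}$ (resp. above $0.65\,[2]_5^{|x|+1}$).

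Your phrase ``classical post-processing handles the writhe and component-count factors'' does not close this gap. You cannot multiply away a unit-modulus phase from $|V_L|$ because the oracle has already discarded the phase, and~(\ref{FKLWeqn1}) depends on that phase linearly. The paper's derivation of the $0.39$ versus $0.65$ window from~(\ref{FKLWeqn1}) is the substance of the forward direction, and nothing in your proposal reproduces it. In short: the $|x^2-y^2|\le 2|x-y|$ squaring argument you use in the forward direction only applies to the plat-closure-of-$b$ relation, which gives a different quantity from the BQP acceptance probability; you need instead the $b^{-1}\gamma b$ link and the explicit interval separation for $|V_L|$ implied by~(\ref{FKLWeqn1}).
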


\begin{proof}  Recall from Section~\ref{jones} that given a braid on $m$ strings, a topological quantum computer can be constructed such that the probability of output zero is $\frac{|V_L(e^{2\pi i /5})|^2}{[2]_5^m}$ and the computer runs in time polynomial in $m$. Given $\epsilon>0$, using independent runs of this computer and a standard sampling approach, the probability of zero can be estimated to within an error of ${\epsilon^2}$, where the number of runs is polynomial in $\epsilon^{-1}$. Hence we may estimate ${|V_L(e^{2\pi i /5})|}$ to within an absolute error of ${\epsilon}{[2]_5^{m/2}}$ in polynomial time. Therefore $\mathrm{P}^A \subseteq \mathrm{BQP}$.

Secondly, suppose we have a BQP language and an input $x$. By Theorem~\ref{FKLWtheorem} we can determine a link $L$, of size polynomial in $|x|$, such that $L$  satisfies Equation~(\ref{FKLWeqn1}), and the number of minima of $L$ is $|x|+1$.  If $x$ is in the language, $\textbf{Pr}(0)<1/4$, hence
\begin{eqnarray} 0\leqslant \frac{1}{1+[2]_5^2} \bigg( 1+
\frac{(-1)^{c(L)+w(L)}(-a)^{3w(L)} V_L(e^{2\pi i/5})}{
[2]_5^{|x|-1}}\bigg) &<& 1/4 \nonumber \\
-[2]_5^{|x|+1}[2]_5^{-2} \leqslant (-1)^{c(L)+w(L)}(-a)^{3w(L)} V_L(e^{2\pi i/5}) &<& [2]_5^{|x|+1}[2]_5^{-2} \left( \frac{1+[2]_5^2}{4} -1\right)\nonumber \\
|V_L(e^{2\pi i/5})| &<& [2]_5^{|x|+1} 0.39. \end{eqnarray}
Whereas, if $x$ is not in the language, $\textbf{Pr}(0)>3/4$, hence
\begin{eqnarray} \frac{1}{1+[2]_5^2} \bigg( 1+
\frac{(-1)^{c(L)+w(L)}(-a)^{3w(L)} V_L(e^{2\pi i/5})}{
[2]_5^{|x|-1}}\bigg) &>& 3/4 \nonumber \\
(-1)^{c(L)+w(L)}(-a)^{3w(L)} V_L(e^{2\pi i/5}) &>& [2]_5^{|x|+1}[2]_5^{-2} \left( \frac{3(1+[2]_5^2)}{4} -1\right)\nonumber \\
|V_L(e^{2\pi i/5})| &>& [2]_5^{|x|+1} 0.65. \end{eqnarray}
Clearly use of an oracle giving an additive approximation for $(|V_L(e^{2\pi i /5})|,[2]_5^{|x|+1})$ will enable us to distinguish these two cases with probability at least 3/4. Hence $\mathrm{BQP} \subseteq \mathrm{P}^A$.

\end{proof}

We saw in Section~\ref{jones} the equivalence of the Jones polynomial and a specialisation of the Tutte polynomial for alternating links, hence we would like an AA for a general planar graph $G$ for $$(T(G;-e^{2\pi i/5},-e^{-2\pi i/5}),u)$$ where $u$ is some reasonable upper bound.

Hyperbolae of the form $H_q :=(x-1)(y-1)=q$ play a crucial role in the manipulation of the Tutte polynomial; loosely speaking the process of performing a \emph{tensor product} on an input graph $G$ with some other fixed graph $N$ enables us to `move around' the Tutte plane. That is the new graph $G\otimes N$ satisfies: \begin{eqnarray}\label{hyp}
T(G\otimes N;x,y)=f(N;x,y)T(G;X,Y)
\end{eqnarray}
where $f$ and the arguments $X,Y$ can be computed in time polynomial in $|x|, |y|$ and $|N|$. However, for any choice of $N$, the new points $X,Y$ satisfy, \begin{eqnarray}
(X-1)(Y-1)=(x-1)(y-1)=q.
\end{eqnarray} Thus, such transformations restrict us to remain on the initial hyperbola $H_q$; see~\cite{jae90} for further details. The close relationship between points on the hyperbola enables us to use an additive approximation at one point to get an additive approximation at any other point $(X,Y)$ on the same hyperbola for which there exists a suitable planar $N$ which transforms $(x,y)$ to $(X,Y)$ by~(\ref{hyp}).
\begin{proposition}
Let $x,y\in \mathbb{Q}$ and $N$ a planar graph on k $vertices$ be fixed. Suppose there is an AA sceme for $(T(G;x,y),u(n))$ for any planar $G$ on $n$ vertices and $m$ edges. Then there is also an AA for $$\bigg(T(G;X,Y),u(n+m(k-2))\bigg),$$
where $X$ and $Y$ are the points determined by the transformation in~(\ref{hyp}) (depending only on $x,y$ and $N$).
\end{proposition}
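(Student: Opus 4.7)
The plan is to reduce approximation of $T(\cdot;X,Y)$ to approximation of $T(\cdot;x,y)$ via the tensor-product identity (\ref{hyp}), then rescale by the fixed constant $f(N;x,y)$.

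Given a planar input graph $G$ on $n$ vertices and $m$ edges and an accuracy parameter $\epsilon>0$, I would first construct $G\otimes N$ in polynomial time. Since $N$ is a fixed planar two-terminal graph on $k$ vertices, $G\otimes N$ is planar and has exactly $n+m(k-2)$ vertices (each edge of $G$ is replaced by a copy of $N$ contributing $k-2$ new internal vertices). The target values $X,Y$ are computed in constant time from the fixed data $(N,x,y)$ via the prescription underlying (\ref{hyp}).

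I would then invoke the hypothesised AA scheme for $(T(\cdot;x,y),u(\cdot))$ on input $G\otimes N$ with adjusted accuracy parameter $\epsilon' := \epsilon\,|f(N;x,y)|$. With probability at least $3/4$ this returns $\hat{T}$ with
$$|\hat{T}-T(G\otimes N;x,y)|\leq \epsilon'\,u(n+m(k-2)).$$
By (\ref{hyp}), $T(G;X,Y)=T(G\otimes N;x,y)/f(N;x,y)$, so outputting $\hat{T}/f(N;x,y)$ yields
$$\left|\frac{\hat{T}}{f(N;x,y)}-T(G;X,Y)\right|\leq \epsilon\,u(n+m(k-2)),$$
which is the required additive approximation. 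The running time is polynomial in $|G\otimes N|$ (hence in $|G|$, since $k$ is a fixed constant) and in $1/\epsilon'$; because $|f(N;x,y)|$ depends only on the fixed data $(N,x,y)$, this is polynomial in $|G|$ and $1/\epsilon$.

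The argument is essentially routine; the content lies in the statement itself, namely that additive approximability propagates along hyperbolae $H_q$ in the Tutte plane under tensor-product moves. The one implicit assumption is that $f(N;x,y)\neq 0$, which is the standard nondegeneracy requirement on the tensor factor and the only point at which the choice of $N$ matters. I do not expect any genuine technical obstacle beyond recording this caveat.
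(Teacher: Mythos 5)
Your proof matches the paper's argument step for step: construct $G\otimes N$, note it is planar with $n+m(k-2)$ vertices, run the assumed AA at $(x,y)$ with accuracy scaled by the constant $|f(N;x,y)|$, and divide by $f(N;x,y)$ using (\ref{hyp}). The added remark that one needs $f(N;x,y)\neq 0$ is a reasonable caveat but does not change the argument, which is otherwise identical to the paper's.
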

\begin{proof} Let $X$ and $Y$ be the points satisfying~(\ref{hyp}). Since $G\otimes N$ is planar (see the construction in~\cite{jae90}) we may use the AA scheme for $$(T(G\otimes N;x,y),u(|V(G\otimes N)|))$$ to get an approximation to within an error $\epsilon u(|V(G\otimes N)|)$ with probability at least $3/4$ in polynomial time. Note that $|V(G\otimes N)| =n+m(k-2)$. Since the running time of the AA scheme is polynomial in $\epsilon^{-1}$, and $f(N;x,y)$ is a constant, we can approximate to within an error of $\epsilon |f(N;x,y)|u(n+m(k-2))$ and still run in polynomial time. By~(\ref{hyp}) we have
$$T(G;X,Y)=\frac{T(G\otimes N;x,y)}{f(N;x,y)},$$
Hence the AA for $T(G\otimes N;x,y)$ yields an AA for $T(G;X,Y)$ with error at most $\epsilon u(n+m(k-2))$ in polynomial time. \end{proof}

Because of the important role of these hyperbolae, it is natural to look at the hyperbolae containing the roots of unity $(-e^{\frac{2\pi i}{n}},-e^{-\frac{2\pi i}{n}})$. These are $H_{q_n}$, where $q_n=2+2\cos (2\pi/n)$, which cut the x-axis at \begin{eqnarray}
x=-1-2\cos (2\pi/n),
\end{eqnarray} corresponding to an evaluation of the chromatic polynomial at one of the well known \emph{Beraha numbers} $B_n=2+2\cos (2\pi/n)$. Since for real $x$ and $y$ and any graph $N$, the related points $X$ and $Y$ will also be real, we cannot find an $N$ such that we can directly relate $T(G\otimes N;1-B_5,0)$ and $T(G;-e^{\frac{2\pi i}{5}},-e^{-\frac{2\pi i}{5}})$. Whether or not we can find a point within absolute value $\epsilon$ of $(1-B_5,0)$ that can be directly related to $(-e^{\frac{2\pi i}{5}},-e^{-\frac{2\pi i}{5}})$ is an interesting ongoing question.
We present some positive results below, and return to these
difficulties in Section~\ref{openqs}.

First note that by Theorems~\ref{sample} and~\ref{Gapsample} we know that $T(G;x,y),\ x,y\in\mathbb{Z}$ will have an AA scheme with respect to an appropriate normalisation, since evaluations at these points are GapP functions. However the drawback is that often the naive normalisation will be too large. This will not always be the case, for example the point $(1-\lambda,0), \lambda\in \mathbb{Z},$ gives the number of proper $\lambda$ colourings, and by Theorems~\ref{log} and~\ref{bestnormcol} here we have a best possible normalisation.

When we consider the non-integer points, the situation is more complicated. A straight forward sampling approach gives the following result.

\begin{proposition}
For rational $(x,y)$ and a connected graph $G$, there exists a AA algorithm for the following:
\begin{enumerate}
\item $(T(G;x,y),y^{|E|}(y-1)^{-|V|+1})$ when  $\{x=1,y>1\}$,
\item $(T(G;x,y),x^{|E|}(x-1)^{|V|-|E|-1})$ when  $\{x>1,y=1\}$,
\item $(T(G;x,y),y^{|E|}(x-1)^{|V|-1})$ when  $\{x> 1,y> 1\}$.
\end{enumerate}
\end{proposition}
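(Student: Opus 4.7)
The plan is to prove each of (i)--(iii) by direct Monte-Carlo sampling. In each case I will build an unbiased estimator $\hat{T}(A)$ for $T(G;x,y)$ from a random subset $A\subseteq E$ drawn under a carefully chosen product distribution on $2^E$, and show that $|\hat T(A)|$ is (almost surely) bounded by the stated normalisation $u$. The additive approximation will then follow from Chebyshev's inequality applied to the empirical mean of $t=\lceil 4/\ep^2\rceil$ independent copies of $\hat T$, exactly as in the proof of Theorem~\ref{sample}. Throughout, the starting point is the Whitney rank (spanning-subgraph) expansion
$$T(G;x,y)=\sum_{A\subseteq E}(x-1)^{r(E)-r(A)}(y-1)^{|A|-r(A)},\qquad r(A)=|V|-c(A),$$
with $r(E)=|V|-1$ because $G$ is connected.

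For (i), specialising $x=1$ kills every term with $r(A)<r(E)$, leaving only the connected spanning subgraphs $A$. I sample each edge independently with probability $p=(y-1)/y$, which assigns $A$ mass $(y-1)^{|A|}/y^{|E|}$; setting $\hat T(A):=y^{|E|}(y-1)^{-|V|+1}$ when $(V,A)$ is connected and $0$ otherwise yields $\mathbb{E}[\hat T]=T(G;1,y)$ with $|\hat T|$ equal either to $0$ or exactly to the normalisation. Case (ii) is entirely dual: $y=1$ forces $|A|=r(A)$, i.e.\ $A$ is a forest; sampling each edge with $p=1/x$ and setting $\hat T(A):=x^{|E|}(x-1)^{|V|-|E|-1}$ on forests (and $0$ elsewhere) gives the analogous construction. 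In both subcases $\mathrm{Var}(\hat T)\le u^2$, so Chebyshev concludes.

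For (iii) I would again sample each edge independently with $p=(y-1)/y$, but now rewrite the Whitney expansion via the Fortuin--Kasteleyn substitution $q:=(x-1)(y-1)$, $v:=y-1$. For connected $G$ this gives the identity
$$T(G;x,y)=\frac{y^{|E|}}{(x-1)(y-1)^{|V|}}\,\mathbb{E}_p\!\bigl[q^{c(A)}\bigr],$$
i.e.\ the estimator $\hat T(A)=y^{|E|}(x-1)^{c(A)-1}(y-1)^{c(A)-|V|}$. Since $\log|\hat T(A)|$ is linear in $c(A)\in\{1,\dots,|V|\}$ with slope $\log q$, whenever $q\ge 1$ its maximum is attained at $c(A)=|V|$ (i.e.\ $A=\emptyset$) and equals $y^{|E|}(x-1)^{|V|-1}=u$, so Chebyshev once more closes the argument.

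The main obstacle, and the genuinely delicate point of (iii), is the sub-hyperbola regime $q<1$: the same estimator is then maximised at $c(A)=1$ and can reach $y^{|E|}(y-1)^{1-|V|}$, which is exponentially larger than $u$ in $|V|$, so a single product distribution controls neither $|\hat T|$ nor its variance. To cover this regime one would need to switch to (or mix in) a dual product distribution $p'=(x-1)/x$ tailored to concentrate mass on low-rank $A$ (in the spirit of case (ii)), or decompose $T$ into pieces handled by the primal and dual schemes and recombine using Proposition~\ref{add}; verifying that some such scheme keeps the variance within a polynomial multiple of $u^2$ is where the argument becomes subtle.
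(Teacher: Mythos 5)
Your treatment of (i) and (ii) is exactly the ``straightforward sampling approach'' the paper alludes to but does not write out, and it is correct: sampling edges independently with probability $p=(y-1)/y$ (resp.\ $p=1/x$) makes the indicator of ``$(V,A)$ connected'' (resp.\ ``$A$ a forest'') an unbiased estimator whose nonzero value is exactly the stated normalisation, so Chebyshev finishes as in Theorem~\ref{sample}. Your setup of (iii) via the random-cluster estimator $\hat T(A)=y^{|E|}(x-1)^{c(A)-1}(y-1)^{c(A)-|V|}$ under $p=(y-1)/y$ is likewise correct, and your observation that $|\hat T|\le u=y^{|E|}(x-1)^{|V|-1}$ holds iff $q:=(x-1)(y-1)\ge 1$ is accurate.

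The concern you raise for $q<1$ is not merely a gap in your argument; the proposition's item (iii) appears to be false in that regime, so no modification of the sampling scheme can close the gap. Indeed for a tree on $n$ vertices one has $T(G;x,y)=x^{n-1}$ while the claimed normalisation is $u=y^{n-1}(x-1)^{n-1}$, and the ratio $T/u=\bigl(x/(y(x-1))\bigr)^{n-1}$ grows exponentially precisely when $q<1$, so $u$ is not even an upper bound. Worse, a padding argument (attach a pendant path of length $\ell$, which multiplies $T$ by $x^{\ell}$ but multiplies $u$ by $(y(x-1))^{\ell}$) would upgrade such an AA into a randomised polynomial-time algorithm computing $T(G;x,y)$ to exponentially small additive error, hence exactly, at a $\#$P-hard point---impossible unless NP$=$RP. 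So the stated proposition should carry the tacit hypothesis $(x-1)(y-1)\ge 1$; this is harmless for the paper's intended applications since the Beraha values $q_n=2+2\cos(2\pi/n)$ for $n\ge 5$ all lie in $[1+\tau,4)$. Under that restriction your proof of (iii) is complete, and there is no need (and no hope) for the dual or mixed scheme you sketch at the end.
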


In other regions, in particular where there are negative terms in the expansion of $T$, cancellation between terms means that there is no longer a natural upper bound by which to normalise. We return to this problem in Section~\ref{openqs}.


\section{An alternative approach}\label{alternative}

Returning to our original motivation, at the moment we are unable to determine whether there is an AA algorithm for $(V_L(e^{2\pi i/5}),[2]_5^{m/2})$ where $L$ is the plat closure of a braid on $m$ strings. However we now show that in order to simulate a quantum computation it would be sufficient to determine the sign of the real part of $V_L(e^{2\pi i/5})$. Particularly in the case that the writhe of $L$ is zero, and hence $V_L(e^{2\pi i/5})$ is real, this appears an easier problem.
\begin{theorem}
Let $A(L)$ be an oracle that returns the sign of the real part of the Jones polynomial of the link $L$ evaluated at $e^{2\pi i/5}$. Then $\mathrm{BQP}\subseteq \mathrm{P}^A.$
\end{theorem}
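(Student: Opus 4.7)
The plan is to show that a single deterministic oracle call to $A$ suffices to decide any BQP problem, using Theorem~\ref{FKLWtheorem} as the bridge and keying on the \emph{sign} of $V_L(e^{2\pi i/5})$ rather than its magnitude, as was done in Theorem~\ref{bqpeqpa}.

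Given a BQP instance $\mathcal{I}$, first apply Theorem~\ref{FKLWtheorem} with a small constant error parameter $\epsilon_0$ (specified below) to construct, in time polynomial in $|\mathcal{I}|$, a link $L$ satisfying inequality~(\ref{FKLWeqn1}). As explained in the discussion following that theorem, the writhe contributions from $b$ and $b^{-1}$ cancel, so $w(L)$ is determined by the orientations of the two strands through the loop $\gamma$ and takes values in $\{-4,0,4\}$; choose these orientations so $w(L)=0$. Then $(-a)^{3w(L)}=1$, and as remarked in the preamble to Theorem~\ref{bqpeqpa}, $V_L(e^{2\pi i/5})$ is real, so the sign of its real part is simply its sign.

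With $w(L)=0$, define $q_L := \frac{1}{1+[2]_5^2}\bigl(1 + \frac{(-1)^{c(L)}\, V_L(e^{2\pi i/5})}{[2]_5^{m(L)-2}}\bigr)$, so that $|\mathbf{Pr}(\mathcal{A}(\mathcal{I})=0) - q_L| < \epsilon_0$. Observe that $q_L > \frac{1}{1+[2]_5^2}$ if and only if $(-1)^{c(L)} V_L(e^{2\pi i/5}) > 0$. The critical numerical fact is that
$$\frac{1}{1+[2]_5^2} \;=\; \frac{1}{1+4\cos^2(\pi/5)} \;=\; \frac{5-\sqrt{5}}{10} \;\approx\; 0.276$$
lies strictly inside $(1/4,\,3/4)$. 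Now choose $\epsilon_0$ less than both $\frac{1}{1+[2]_5^2}-\tfrac{1}{4}\approx 0.026$ and $\tfrac{3}{4}-\frac{1}{1+[2]_5^2}\approx 0.474$; say $\epsilon_0=0.02$. Since Theorem~\ref{FKLWtheorem} runs in time polynomial in $\log \epsilon_0^{-1}$, a constant choice costs nothing.

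The BQP promise forces $\mathbf{Pr}(\mathcal{A}(\mathcal{I})=0)\in [0,1/4]\cup[3/4,1]$, and the choice of $\epsilon_0$ pushes $q_L$ to the corresponding side of $\frac{1}{1+[2]_5^2}$, pinning down $\operatorname{sgn}((-1)^{c(L)} V_L(e^{2\pi i/5}))$ from the BQP answer. The classical reduction is therefore: construct $L$, query $A(L)$ for $\operatorname{sgn}(\operatorname{Re} V_L(e^{2\pi i/5}))$, compute $c(L) \bmod 2$ in polynomial time from the link diagram, and output the BQP decision by comparing. The only nontrivial step is the numerical verification $\frac{1}{1+[2]_5^2}\in(1/4,3/4)$; the whole reduction hinges on this constant landing strictly inside the interval where BQP's $\{<\!1/4,\,>\!3/4\}$ gap can be translated into a definite sign for $V_L$.
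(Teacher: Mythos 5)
Your proposal is correct and follows essentially the same route as the paper: both reduce to Theorem~\ref{FKLWtheorem}, restrict to $w(L)=0$ so that $V_L(e^{2\pi i/5})$ is real, and then observe that the BQP promise $\mathbf{Pr}(0)\in[0,1/4]\cup[3/4,1]$ forces the sign of $(-1)^{c(L)}V_L(e^{2\pi i/5})$. The only stylistic difference is that you package the arithmetic by noting the threshold $\frac{1}{1+[2]_5^2}\approx0.276$ lies strictly inside $(1/4,3/4)$, whereas the paper rearranges the inequalities to exhibit explicit constants $-0.09[2]_5^{|x|-1}$ and $1.71[2]_5^{|x|-1}$ bounding $(-1)^{c(L)}V_L$ away from zero on either side; these are the same calculation presented two ways, and your explicit choice of $\epsilon_0$ to absorb the approximation error of~(\ref{FKLWeqn1}) is if anything slightly more careful than the paper's treatment.
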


\begin{proof}
This proof follows that of Theorem~\ref{bqpeqpa}. Suppose we have a BQP language and an input $x$.
By Theorem~\ref{FKLWtheorem} we can determine a link $L$, of size polynomial in $|x|$, such that $L$  satisfies Equation~(\ref{FKLWeqn1}), and the number of minima of $L$ is $|x|+1$.   We now assume that $w(L)=0$; the proof in the cases $w(L)=4$ and $w(L)=-4$ follow by a similar argument. Simplifying Equation~(\ref{FKLWeqn1}) in the case $w(L)=0$ we have
\begin{eqnarray}
 \mathbf{Pr}(0)&=&\frac{1}{1+[2]_5^2} \bigg( 1+ \frac{(-1)^{c(L)} V_L(e^{2\pi i/5})}{ [2]_5^{m(L)-2}}\bigg).\end{eqnarray}
If $x$ is in the language, $\textbf{Pr}(0)<1/4$, hence
\begin{eqnarray} \frac{1}{1+[2]_5^2} \bigg( 1+
\frac{(-1)^{c(L)} V_L(e^{2\pi i/5})}{
[2]_5^{|x|-1}}\bigg) &<& 1/4 \nonumber \\
(-1)^{c(L)} V_L(e^{2\pi i/5}) &<& [2]_5^{|x|-1} \left( \frac{1+[2]_5^2}{4} -1\right)\nonumber \\
(-1)^{c(L)} V_L(e^{2\pi i/5}) &<& -[2]_5^{|x|-1} 0.09<0. \end{eqnarray}
Whereas, if $x$ is not in the language, $\textbf{Pr}(0)>3/4$, hence
\begin{eqnarray} \frac{1}{1+[2]_5^2} \bigg( 1+
\frac{(-1)^{c(L)} V_L(e^{2\pi i/5})}{
[2]_5^{|x|-1}}\bigg) &>& 3/4 \nonumber \\
(-1)^{c(L)} V_L(e^{2\pi i/5}) &>& [2]_5^{|x|-1} \left( \frac{3(1+[2]_5^2)}{4} -1\right)\nonumber \\
(-1)^{c(L)} V_L(e^{2\pi i/5}) &>& [2]_5^{|x|-1} 1.71>0. \end{eqnarray}
Clearly use of an oracle giving an additive approximation for the sign of $V_L(e^{2\pi i /5})$ will enable us to distinguish these two cases with probability at least 3/4. Hence $\mathrm{BQP} \subseteq \mathrm{P}^A$.\end{proof}

In the previous section we outlined the importance of the hyperbolae $H_q:=(x-1)(y-1)=q$ to the Tutte polynomial. For $x,y,X,Y$ and $N$ related as in Equation~(\ref{hyp}), we can determine the sign of $T(G;X,Y)$ if we can determine the sign of $T(G\otimes N;x,y)$.

This gives rise to the natural question of the complexity of determining whether a function is greater than or less than zero, in particular the Tutte polynomial, of which the Jones is a specialization. It is immediate from the definitions that the Tutte is non-negative in the region $x,y\geqslant 0$. At all other integer points on the axes the Tutte polynomial counts either colourings or flows, up to easy multiplicative factors. Since these factors may be positive or negative, we can always select one of either ``$T(G;x,y)$ is non-negative'' or ``$T(G;x,y)$ is non-positive'' that is true, in polynomial time. In the above situation we are not concerned with cases in which the value is exactly zero, hence this would suffice. We consider the situation at other points in the next section.


\section{Some combinatorial and complexity questions}\label{openqs}

We close with the following questions
which have been prompted by this work.

In Section~\ref{TP} we noted that we are unable to find a suitable normalisation for approximating the Tutte polynomial  when the expansion included negative terms. We return to this here and examine the chromatic polynomial to highlight the difficulties. We have seen that for a connected graph $G$, we have an additive approximation for $(P_G(\lambda),(\lambda-1)^n))$ for all $\lambda\in \mathbb{Z}^+$. However we are most interested in an additive approximation at the non-integral Beraha numbers. One  might hope to achieve the above approximation for all $\lambda\in\mathbb{R}^{>1}$, however this seems unlikely as $(\lambda-1)^n$ is not even close to being an upper bound for $P_G(\lambda)$. Indeed consider the complete graphs: for small $\delta$
\begin{eqnarray}
P_{K_n}(1+\delta)&=&(1+\delta)(\delta)(-1+\delta)\cdots (-n+\delta+1)\\
&\approx&(-1)^{n-2} \delta (n-2)!. \end{eqnarray}
This prompts the first open question.
\begin{question} What is the best upper bound depending on $\lambda, n$ and $m$ for $|P_G(\lambda)|$ for all (planar) graphs $G$ on $n$ vertices and $m$ edges? \end{question}
As far as we are
aware the best upper bound known~\cite{woo77} is: \begin{displaymath} |P_G(\lambda)|\leqslant
|\lambda|^{n-m}(|\lambda|+1)^m\quad \lambda\in\mathbb{C}.\end{displaymath}
For general graphs we can make the following small improvement.
\begin{proposition} Let $G$ be a graph and let $\lambda \in \mathbb{C}$, then
\begin{eqnarray*}
|P_{G}(\lambda)|&\leqslant&\left(\frac mn -1\right)^{n-m}\left(\frac mn \right)^m\qquad \textrm{for }\frac mn \geqslant |\lambda|+1\\
|P_{G}(\lambda)|&\leqslant&(|\lambda|)^{n-m}(|\lambda|+1 )^m\qquad\quad  \textrm{for } \frac mn < |\lambda|+1. \end{eqnarray*}
If $G$ is a connected graph then
\begin{eqnarray*}
|P_{G}(\lambda)|&\leqslant&\left(\frac m{n-1} -1\right)^{n-m-1}\left(\frac m{n-1} \right)^m |\lambda| \quad \textrm{for } \frac m{n-1} \geqslant |\lambda-1|\\
|P_{G}(\lambda)|&\leqslant&(|\lambda-1| -1)^{n-m-1}(|\lambda-1|)^m |\lambda| \quad\quad \textrm{for } \frac m{n-1} < |\lambda-1|.\end{eqnarray*}\end{proposition}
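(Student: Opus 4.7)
Both inequalities come from a one-parameter strengthening of Wood's bound, followed by calculus in the parameter. I would first prove the auxiliary lemma: for any graph $G$ on $n$ vertices with $m$ edges, and any real $x$ with $x \geq |\lambda|$ and $x > 0$,
$$|P_G(\lambda)| \leq x^{n-m}(x+1)^m.$$
This is by induction on $m$. The base $m=0$ is $|\lambda|^n \leq x^n$, and the inductive step applies $|P_G(\lambda)| \leq |P_{G-e}(\lambda)| + |P_{G/e}(\lambda)|$ together with the fact that $x^{n'-k}(x+1)^k$ is non-decreasing in $k$ for $x > 0$ (the ratio $(x+1)/x$ exceeds $1$, and multi-edges created by contraction do not affect $P$). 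The two terms combine to $x^{n-m+1}(x+1)^{m-1} + x^{n-m}(x+1)^{m-1} = x^{n-m}(x+1)^m$.

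I would then minimize $g(x) = x^{n-m}(x+1)^m$ over $x \geq |\lambda|$. Differentiating $\log g$ yields the interior critical point $x^* = m/n - 1$. When $m/n \geq |\lambda|+1$ this $x^*$ is feasible and substitution gives $(m/n - 1)^{n-m}(m/n)^m$; otherwise the minimum is attained at the boundary $x = |\lambda|$, recovering Wood's bound $|\lambda|^{n-m}(|\lambda|+1)^m$. This proves the first two inequalities of the proposition.

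For the connected case I would prove the parallel lemma: for connected $G$ and any real $y$ with $y > 1$ and $y \geq |\lambda - 1|$,
$$|P_G(\lambda)| \leq (y-1)^{n-m-1} y^m |\lambda|.$$
Induct on $m$; the base is a tree ($m = n-1$), where the right side reduces to $y^{n-1}|\lambda|$, which dominates $|P_{T_n}(\lambda)| = |\lambda| \cdot |\lambda - 1|^{n-1}$ since $y \geq |\lambda - 1|$. For $m \geq n$ the graph contains a cycle, so a non-bridge edge $e$ exists, keeping both $G-e$ and $G/e$ connected; deletion--contraction together with monotonicity of $(y-1)^{n-k-2}y^k$ in $k$ (using $y/(y-1) > 1$) combines the two contributions into $(y-1)^{n-m} y^{m-1}|\lambda| + (y-1)^{n-m-1} y^{m-1}|\lambda| = (y-1)^{n-m-1} y^m |\lambda|$. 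Optimizing $h(y) = (y-1)^{n-m-1} y^m$ over $y \geq \max(1, |\lambda - 1|)$ gives the interior critical point $y^* = m/(n-1)$; the two regimes in the proposition correspond to $y^* \geq |\lambda - 1|$ (use $y = y^*$, giving the first connected bound) and $y^* < |\lambda - 1|$ (use the boundary $y = |\lambda - 1|$, which automatically exceeds $1$ since $m/(n-1) \geq 1$, giving the second connected bound).

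The only step requiring care is fixing the direction of monotonicity in each induction, which dictates the ``$x > 0$'' and ``$y > 1$'' side conditions and is the origin of the two-case split in each part of the proposition; once this is set up correctly the calculation is formally the same in both halves.
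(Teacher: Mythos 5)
The paper states this proposition without proof, so there is nothing to compare against directly; evaluating the proposal on its own terms, the argument is correct and complete. The structure — replace Woodall's fixed normalizer $|\lambda|$ by a free parameter, prove the parametrized bound $|P_G(\lambda)|\le x^{n-m}(x+1)^m$ for all $x\ge|\lambda|$, $x>0$ by deletion--contraction induction (and its connected analogue with parameter $y>1$, $y\ge|\lambda-1|$, anchored at trees and descending along non-bridge edges), and then minimize the right-hand side over the admissible parameter range — is exactly what the statement calls for, since the two displayed regimes in each half of the proposition are literally the ``interior critical point is feasible'' and ``boundary'' cases of that one-variable minimization. The small details you flag are handled correctly: the monotonicity of the bound in the edge count $k$ absorbs the loss of parallel edges under contraction; the critical points $x^*=m/n-1$ and $y^*=m/(n-1)$ are genuine minima because $\log g$ (resp.\ $\log h$) has a unique sign change of its derivative when $m>n$ (resp.\ $m\ge n$), and is monotone increasing otherwise; and in the connected case the boundary $y=|\lambda-1|$ lies in the admissible range $y>1$ precisely because $m/(n-1)\ge 1$ forces $|\lambda-1|>1$ in the second regime. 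One minor point worth noting in a write-up: when $m=n-1$ the exponent $n-m-1$ is zero and the lemma's bound is $y^{n-1}|\lambda|$, whose infimum over $y>1$ is attained only in the limit $y\to 1^+$; this is harmless (the infimum is still a valid upper bound), but for cleanliness you could state the connected lemma for $y\ge1$ with the convention $0^0=1$.
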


These bounds hold for all $\lambda\in \mathbb{C}$, however the Beraha numbers have special characteristics.
The evaluations of the chromatic polynomial at these points have some beautiful, but not totally understood, properties~\cite{tut70}. The values begin $ 4, 0,1, 2 ,1+ \tau, 3, \ldots$ and converge towards 4, where $\tau$ is the golden ratio $\frac{1+\sqrt{5}}{2}$. The integers in this series are clearly central to the theory of chromatic polynomials. Writing $B_5=1+\tau$, then for any plane triangulation $T$ on $n$ vertices: \begin{eqnarray}
|P_T(B_5)|\leqslant\tau^{5-n}\label{gold1}.
\end{eqnarray}
For a connected graph $G$ with average degree at least $3.24$ (note that a planar triangulation has average degree $6-12/n$), the above proposition gives,
$$|P_{G}(B_5)|\leqslant\left(\frac m{n-1} -1\right)^{n-m-1}\left(\frac m{n-1} \right)^m B_5.$$
Hence we ask:
\begin{question} Is there a better bound for $|P_G(B_n)|$ than there is for an evaluation at a general point? \end{question}

Following the results of Section~\ref{alternative} we are also prompted to examine the complexity of determining whether the Tutte polynomial is greater than or equal to, or less than zero at a given point. Recall that this decision problem is trivial for $x,y\geqslant 0$, and for integer points on the axes. Again considering the specialization to the chromatic polynomial we ask:
\begin{question} For fixed $\lambda\in \mathbb{Q}$, is it NP-hard to decide whether $P_G(\lambda)$ is greater than or equal to, or less than zero?\end{question}

Note that this is trivial for $\lambda\in\mathbb{Z}$. It is also
P-time decidable for $\lambda < 32/27$ by the following theorem of
Woodall~\cite{woo77} and Jackson~\cite{jac93}.
\begin{theorem} Let $G$ be a graph without loops on $n$ vertices, $\kappa$ components and $b$ blocks.\begin{enumerate}
\item If $\ \lambda<0$, then $P_G(\lambda)$ is non-zero with the sign of $(-1)^{n}$;
\item If $\ 0<\lambda<1$, then $P_G(\lambda)$ is non-zero with the sign of $(-1)^{n-\kappa}$;
\item If $\ 1<\lambda<\frac{32}{27}$, then $P_G(\lambda)$ is non-zero with the sign of $(-1)^{n-\kappa-b}$.\end{enumerate}
\end{theorem}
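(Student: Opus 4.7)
The plan is induction using the deletion--contraction recurrence $P_G(\lambda)=P_{G-e}(\lambda)-P_{G/e}(\lambda)$, with each part needing a tailored choice of edge and a slightly stronger inductive hypothesis.

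For part (i) I induct on the number of edges. The base case is the edgeless graph, where $P_G(\lambda)=\lambda^n$; for $\lambda<0$ this has sign $(-1)^n$. In the inductive step, pick any edge $e$; since $G$ is loopless, $e$ is not a loop and $G/e$ is also loopless. Then $G-e$ has $n$ vertices and $G/e$ has $n-1$ vertices, so the inductive hypothesis gives that $P_{G-e}(\lambda)$ and $-P_{G/e}(\lambda)$ both have sign $(-1)^n$, and the claim follows.

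For part (ii) I track $\kappa$ as well. If $e$ is not a bridge, both $G-e$ and $G/e$ have $\kappa$ components, and the argument of (i) goes through with exponent $n-\kappa$ in place of $n$. If $e$ is a bridge, I use the identity $P_G(\lambda)=\frac{\lambda-1}{\lambda}P_{G-e}(\lambda)$ (which follows from colouring the two components joined by $e$ independently and then forcing $u$ and $v$ to receive different colours). Since $\kappa(G-e)=\kappa(G)+1$, the inductive hypothesis gives that $P_{G-e}(\lambda)$ has sign $(-1)^{n-\kappa-1}$, and $(\lambda-1)/\lambda<0$ on $(0,1)$, so $P_G(\lambda)$ has sign $(-1)^{n-\kappa}$.

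The main obstacle is part (iii). A naive induction on edges fails because once $\lambda>1$ the two terms $P_{G-e}(\lambda)$ and $-P_{G/e}(\lambda)$ in the recurrence can carry opposite signs, so one must control a cancellation. Following Jackson, my plan is to strengthen the inductive hypothesis to a quantitative lower bound: prove that for connected $G$ on $n\geq 2$ vertices with $b$ blocks, $(-1)^{n-1-b}P_G(\lambda)\geq f(\lambda)^{n-1}$ for an explicit positive function $f$ on the interval $(1,32/27)$. I would reduce to the connected case ($\kappa=1$) by multiplicativity of $P_G$ over components, and then peel blocks off using $P_G(\lambda)/\lambda=\prod_B P_B(\lambda)/\lambda$ over blocks $B$, noting that a bridge block contributes the positive factor $\lambda-1$ and that each nontrivial block flips one power of $-1$ in the sign count while reducing $b$ by one. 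The remaining, most delicate, case is a $2$-connected graph on $n\geq 3$ vertices, handled by edge-induction: pick an edge $e$ whose deletion keeps the graph $2$-connected and use the quantitative lower bound in the hypothesis applied to $G-e$ and $G/e$ to dominate the cross term. The constant $32/27$ enters as the largest $\lambda>1$ for which the recursive inequality governing $f(\lambda)$ still admits a positive solution; verifying that this lower bound propagates through every configuration that arises (bridges, edges whose removal destroys $2$-connectedness, parallel edges created by contraction) is the technical heart of the argument and the step where I would expect most of the case analysis to live.
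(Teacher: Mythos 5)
The paper does not prove this theorem at all; it is quoted as a known result of Woodall~\cite{woo77} and Jackson~\cite{jac93}, so there is no in-paper proof to compare against. Judged on its own terms, your argument for parts (i) and (ii) is correct and is essentially the classical route: induction on the number of edges via deletion--contraction, with the bridge identity $P_G(\lambda)=\tfrac{\lambda-1}{\lambda}P_{G-e}(\lambda)$ handling the case where deleting $e$ increases $\kappa$, and the key observation that on the relevant intervals the two terms of the recurrence (after the sign flip) never cancel. One small point worth making explicit: when you contract an edge that has a parallel partner, $G/e$ acquires a loop and $P_{G/e}\equiv 0$, so the recurrence degenerates to $P_G=P_{G-e}$; this still closes the induction since $G-e$ has the same $n$ (and the same $\kappa$), but you should say so rather than assert that $G/e$ is automatically loopless.

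Part (iii), however, contains a genuine gap. What you have written correctly identifies the shape of Jackson's argument --- reduce to $2$-connected graphs by multiplicativity over components and blocks, then strengthen the inductive hypothesis to a quantitative bound of the form $(-1)^{n-1-b}P_G(\lambda)\geq f(\lambda)^{n-1}$ and propagate it through deletion--contraction --- but it supplies none of the substance. You never specify $f$, never derive the recursive inequality that $f$ must satisfy, never show where $32/27$ emerges from that inequality, and never carry out the case analysis (bridges, parallel edges created by contraction, loss of $2$-connectedness under deletion, the base cases of small $2$-connected graphs) that you yourself flag as ``the technical heart of the argument.'' Since this case analysis \emph{is} the theorem --- it is the entire content of Jackson's paper, and the reason the result carries his name --- the proposal as written proves (i) and (ii) but only gestures at (iii). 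To complete it you would need to actually choose the threshold function (Jackson's argument uses a bound involving powers of a quantity like $3\lambda-2$ together with the factor $(\lambda-1)$), prove the inductive inequality for $2$-connected graphs on $n\geq 3$ vertices, and verify the boundary cases; until then part (iii) stands as an announced plan rather than a proof.
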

Note that $P_G(\lambda)\neq 0$ for $\lambda\in\mathbb{Q}\backslash \mathbb{Z}$, since the chromatic polynomial has integer coefficients. It is easy to show the following.\begin{itemize}
\item Let $\lambda \in \mathbb{Q}\backslash \mathbb{Z}$. If deciding whether $P_G(\lambda)>0$ is NP-hard, then it is also NP-hard to decide whether $P_G(\lambda+1)>0$ for a general graph $G$. \end{itemize}
However since it is easy to decide for $\lambda<32/27$, the converse cannot be true for all $\lambda \in \mathbb{Q}\backslash \mathbb{Z}$ unless these questions are all in P. It would be interesting to know the answer to the following questions.
\begin{question} Does there exist a critical $\alpha>0$ such that deciding whether $P_G(\lambda)$ is greater than or less than zero is NP-hard for all rational $\lambda>\alpha$, $\lambda \not\in \mathbb{Z}$?\end{question}
\begin{question} Is this critical $\alpha$ equal to $32/27$?\end{question}

As before we are more interested in evaluating the chromatic polynomial at the Beraha points than at general non-integers, and the graphs we are most interested in are planar. Hence we ask the specific question:
\begin{question} For planar graphs, is the problem of deciding whether $P_G(B_n)$ is greater or less than zero NP-hard?\end{question}
For any graph $G$, not necessarily planar, it is known that $P_G(B_n)\neq 0$ for $n\geq 5, n\neq 6,10$, \cite{sal00}.
Also Tutte~\cite{tut70} has shown that for any planar triangulation the following equation holds, writing $B_{10} =\tau \sqrt{5}$,  \begin{eqnarray}\label{golden}
P_T(B_{10})=\sqrt{5} \tau^{3(n-3)} (P_T(B_5))^2.
\end{eqnarray}
So $P_T(B_{10})>0$ for all plane triangulations $T$, indeed a simple  reverse induction shows that for any planar graph $G$, $P_G(B_{10})>0$ holds. Further, for any outerplanar graph $G$, we can form the planar graph $G^+$ by adding a new vertex adjacent to all original vertices. Since
$$P_{G^+}(\lambda+1)=(\lambda+1) P_G(\lambda)$$
holds for all positive integers, it holds for all $\lambda \in \mathbb{R}$. Noting that $B_{10}=B_5+1$, we conclude that $P_G(B_5)>0$ for all outerplanar graphs $G$.

\section*{Acknowledgements}
The authors would like to thank the referee for many helpful comments and suggestions, in particular for suggesting a simpler proof of Theorem~\ref{log}. We also thank Graham Brightwell and Colin McDiarmid for comments on an earlier version of this work~\cite{bor03c}.

\end{document}